\newif\ifarxiv%
\newcommand{\KA}{\ensuremath{\mathsf{KA}}\xspace}
\newcommand{\F}{\ensuremath{\mathsf{F}_1}\xspace}
\newcommand{\SKA}{\ensuremath{\mathsf{SKA}}\xspace}
\newcommand{\NSF}{\ensuremath{\mathsf{NSF}}\xspace}
\newcommand{\SL}{\ensuremath{\mathsf{SL}}\xspace}
\newcommand{\SF}{\ensuremath{\mathsf{SF}_1}\xspace}
\newcommand{\equivka}{\equiv_{\scriptscriptstyle\KA}}
\newcommand{\equivf}{\equiv_{\scriptscriptstyle\F}}
\newcommand{\equivska}{\equiv_{\scriptscriptstyle\SKA}}
\newcommand{\equivsf}{\equiv_{\scriptscriptstyle\SF}}
\newcommand{\equivsem}{\equiv_{\scriptscriptstyle\SL}}
\newcommand{\notequivska}{\not\equiv_{\scriptscriptstyle\SKA}}
\newcommand{\terms}{\ensuremath{\mathcal{T}}}
\newcommand{\termsreg}{\terms_{\KA}}
\newcommand{\termsregf}{\terms_{\F}}
\newcommand{\termssf}{\terms_{\SF}}
\newcommand{\termssynreg}{\terms_{\SKA}}
\newcommand{\termsskanf}{\terms_{\NSF}}
\newcommand{\sem}[1]{\llbracket#1\rrbracket}
\newcommand{\semka}[1]{\sem{#1}_{\scriptscriptstyle\KA}}
\newcommand{\semf}[1]{\sem{#1}_{\scriptscriptstyle\F}}
\newcommand{\semska}[1]{\sem{#1}_{\scriptscriptstyle\SKA}}
\newcommand{\semsf}[1]{\sem{#1}_{\scriptscriptstyle\SF}}
\newcommand{\semsem}[1]{\sem{#1}_{\scriptscriptstyle\SL}}
\newcommand{\powerset}[1]{\mathcal{P}(#1)} 
\newcommand{\continuation}{o}
\newcommand{\reach}{\rho}
\newcommand{\pipe}{\;\;|\;\;}
\newcommand{\twopartdef}[4]{\left\{\begin{array}{ll}#1 & #2 \\#3 & #4\end{array}\right.}
\newcommand{\bact}{\Sigma}
\newcommand{\cross}{\terms_{\SL}}
\newcommand{\nfcross}{\overline{\cross}}
\newcommand{\pipr}[1]{(#1)}
\newcommand{\piprs}[1]{#1}
\newcommand{\piproduct}[1]{\pipr{#1}^{\Pi}}
\newcommand{\piproducts}[1]{\piprs{#1}^{\Pi}}
\newcommand{\synchronouslanguages}{\mathcal{L}_{\bact}}
\newcommand{\modela}{s}
\newcommand{\countermodel}{\mathcal{L}_{\modela}}
\begin{document}
\title{%
   Completeness and Incompleteness of Synchronous Kleene~Algebra%
    \texorpdfstring{\thanks{%
       This work was partially supported by ERC Starting Grant ProFoundNet (679127), a Leverhulme Prize (PLP--2016--129) and a Marie Curie Fellowship (795119).
         The first author conducted part of this work at Centrum Wiskunde \& Informatica, Amsterdam.
    }}{}
}

\author{%
    Jana Wagemaker \inst{1}
    \and
    Marcello Bonsangue \inst{2}
    \and
    Tobias Kapp\'{e}\inst{1}
    \and\texorpdfstring{\\}{}
    Jurriaan Rot\inst{1,3}
    \and
    Alexandra Silva\inst{1}
}

\authorrunning{J. Wagemaker \and M. Bonsangue \and T. Kapp\'{e} \and J. Rot \and A. Silva}
\institute{
    University College London, London, United Kingdom
    \and
    Leiden University, Leiden, The Netherlands
    \and
    Radboud University, Nijmegen, The Netherlands
}

\maketitle
\begin{abstract}
 \emph{Synchronous Kleene algebra} (\emph{SKA}), an extension of Kleene algebra (KA), was proposed by Prisacariu as a tool for reasoning about programs that may execute synchronously, i.e., in lock-step.
We provide a countermodel witnessing that the axioms of SKA are incomplete w.r.t.\ its language semantics, by exploiting a lack of interaction between the \emph{synchronous product} operator and the Kleene star.
We then propose an alternative set of axioms for SKA, based on Salomaa's axiomatisation of regular languages, and show that these provide a sound and complete characterisation w.r.t.\ the original language semantics.
\end{abstract}

\section{Introduction}
\emph{Kleene algebra} (\emph{KA}) is applied in various contexts, such as relational algebra and automata theory. An important use of KA is as a logic of programs. This is because the axioms of KA correspond well to properties expected of sequential program composition, and hence they provide a logic for reasoning about control flow of sequential programs presented as Kleene algebra expressions. Regular languages then provide a canonical semantics for programs expressed in Kleene algebra, due to a tight connection between regular languages and the axioms of KA:\ an equation is provable using the Kleene algebra axioms if and only if the corresponding regular languages coincide~\cite{boffa-1990,krob-1991,kozen-1994}.

In~\cite{prisacariu}, Prisacariu proposes an extension of Kleene algebra, called \emph{synchronous Kleene algebra} (\emph{SKA}). The aim was to introduce an algebra useful for studying not only sequential programs but also \emph{synchronous} concurrent programs. Here, synchrony is understood as in Milner's SCCS~\cite{milner}, i.e., each program executes a single action instantaneously at each discrete time step. Hence, the synchrony paradigm assumes that basic actions execute in one unit of time and that at each time step, all components capable of acting will do so.
This model permits a \emph{synchronous product} operator, which yields a program that, at each time step, executes some combination of the actions put forth by the operand programs.

This new operator is governed by various expected axioms such as associativity and commutativity. Another axiom describes the interaction between the synchronous product and the sequential product, capturing the intended lock-step behaviour. Crucially, the axioms do not entail certain equations that relate the Kleene star (used to describe loops) and the synchronous product.

The contributions of this paper are twofold. First, we show that the lack of connection between the Kleene star and the synchronous product is problematic. In particular, we exploit this fact to devise a countermodel that violates a semantically valid equation, thus showing that the SKA axioms are incomplete w.r.t.\ the language semantics. This invalidates the completeness result in~\cite{prisacariu}.

The second and main contribution of this paper is a sound and complete characterisation of the equational theory of SKA in terms of a generalisation of regular languages. The key difference with~\cite{prisacariu} is the shift from \emph{least} fixpoint axioms in the style of Kozen~\cite{kozen-1994} to a \emph{unique} fixpoint axiom in the style of Salomaa~\cite{salomaa}. In the completeness proof, we give a reduction to the completeness result of Salomaa via a normal form for SKA expressions. As a by-product, we get a proof of the correctness of the partial derivatives for SKA provided in~\cite{broda}.

This paper is organised as follows. In \autoref{preliminaries} we discuss the necessary preliminaries. In \autoref{preliminaries:ska} we discuss SKA as presented in~\cite{prisacariu}. Next, in \autoref{incompleteness}, we demonstrate why SKA is incomplete, and in \autoref{newaxiomatisation} go on to provide a new set of axioms, which we call \SF. The latter section also includes basic results about the partial derivatives for SKA from~\cite{broda}. 
In \autoref{fundamental} we provide an algebraic characterisation of \SF-terms; this characterisation is used in~\autoref{completeness}, where we prove completeness of \SF w.r.t.\ to its language model. In \autoref{section:related-work} we consider related work and conclude by discussing directions for future work in \autoref{section:future-work}. For the sake of readability, some of the proofs appear in the appendix.

\section{Preliminaries}\label{preliminaries}
Throughout this paper, we write $2$ for the two-element set $\{ 0, 1 \}$.

\paragraph{Languages}
Throughout the paper we fix a finite alphabet $\Sigma$. A \emph{word} formed over $\Sigma$ is a finite sequence of symbols from $\Sigma$. The \emph{empty word} is denoted by $\varepsilon$. We write $\Sigma^*$ for the set of all words over $\Sigma$. \emph{Concatenation} of words $u, v \in \Sigma^*$ is denoted by $uv \in \Sigma^*$. A \emph{language} is a set of words. For $K,L\subseteq \Sigma^*$, we
define
\begin{align*}
& K\cdot L= \{uv : u\in K, v\in L\}
&& K+L=K\cup L
&& K^*= \bigcup\nolimits_{n\in\mathbb{N}} K^n,
\end{align*}
where $K^0=\{\varepsilon\}$ and $K^{n+1}=K\cdot K^n$.

\paragraph{Kleene Algebra}
We define a \emph{Kleene algebra}~\cite{kozen-1994} as a tuple $(A,+,\cdot,^*,0,1)$ where $A$ is a set, $^*$ is a unary operator, $+$ and $\cdot$ are binary operators and $0$ and $1$ are constants. Moreover, for all $e,f,g\in A$ the following axioms are satisfied:
\begin{align*}
& e + (f + g) = (e + f) + g
&& e + f = f + e
&& e + 0 = e \qquad e + e = e
\\
&e \cdot 1 = e = 1 \cdot e
&& e \cdot 0 = 0 = 0 \cdot e
&&e \cdot (f \cdot g) = (e \cdot f) \cdot g
\\
&e^* = 1 + e \cdot e^* = 1 + e^* \cdot e
&& (e + f) \cdot g = e \cdot g + f \cdot g
&& e \cdot (f + g) = e \cdot f + e \cdot g
\end{align*}
Additionally, we write $e \leq f$ as a shorthand for $e + f = f$, and require that the \emph{least fixpoint axioms}~\cite{kozen-1994} hold, which stipulate that for $e, f, g \in A$ we have
\begin{mathpar}
e + f \cdot g \leq g \implies f^* \cdot e \leq g
\and
e + f \cdot g \leq f \implies e \cdot g^* \leq f
\end{mathpar}

\noindent
The set of \emph{regular expressions}, denoted $\termsreg$, is described by the grammar:
\[
    \termsreg\ni e,f ::= 0 \pipe 1 \pipe a \in \Sigma \pipe e + f \pipe e \cdot f \pipe e^*
\]
Regular expressions can be interpreted in terms of languages.
This is done by defining $\semka{-}: \termsreg\rightarrow \powerset{\Sigma^*}$ inductively, as follows.
\begin{align*}
&\semka{0} = \emptyset
&& \semka{a} = \{ a \}
&& \semka{e \cdot f} = \semka{e} \cdot \semka{f} \\
&\semka{1} = \{\varepsilon\}
&&\semka{e + f} = \semka{e} + \semka{f}
&&\semka{e^*} = \semka{e}^*
\end{align*}
A language $L$ is called \emph{regular} if and only if $L=\semka{e}$ for some $e\in\termsreg$.%

We write $\equivka$ for the smallest congruence on $\termsreg$ induced by the Kleene algebra axioms --- e.g., for all $e \in \termsreg$, we have $1 + e \cdot e^* \equivka e^*$.
Intuitively, $e \equivka f$ means that the regular expressions $e$ and $f$ can be proved equivalent according to the axioms of Kleene algebra.
A pivotal result in the study of Kleene algebras tells us that $\semka{-}$ characterises $\equivka$, in the following sense:
\begin{theorem}[Soundness and Completeness of KA~\cite{kozen-1994}]%
\label{theorem:sandcka}
For all $e,f\in\termsreg$, we have that $e\equivka f$ if and only if $\semka{e}=\semka{f}
$.
\end{theorem}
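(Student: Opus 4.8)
The plan is to prove the two implications separately, with soundness being routine and completeness being the substantial part.

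\textbf{Soundness.} For the forward direction it suffices to observe that language equality is a congruence on $\termsreg$ with respect to $+$, $\cdot$ and $^*$ --- since the language operations $K + L$, $K \cdot L$ and $K^*$ are genuine functions --- and that every \KA axiom becomes a true identity once both sides are interpreted via $\semka{-}$. Associativity, commutativity, idempotence, the unit and annihilation laws, distributivity, and the unrolling law $e^* = 1 + e \cdot e^*$ are immediate from the definitions of the set operations. For the two least fixpoint axioms one argues semantically: if $\semka{e} \cup \semka{f} \cdot \semka{g} \subseteq \semka{g}$, then a straightforward induction on $n$ shows $\semka{f}^n \cdot \semka{e} \subseteq \semka{g}$ for all $n$, hence $\semka{f}^* \cdot \semka{e} = \bigcup_n \semka{f}^n \cdot \semka{e} \subseteq \semka{g}$; the other fixpoint axiom is symmetric. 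Since $\equivka$ is the \emph{smallest} congruence containing the axioms, it is contained in language equality, which is exactly what soundness asserts.

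\textbf{Completeness.} This is the hard direction, and I would follow Kozen's automata-theoretic argument. The key idea is that \KA is expressive enough to reason about finite automata presented as matrices over $\termsreg$. Concretely: (i) to each $e \in \termsreg$ associate a finite (nondeterministic) automaton --- e.g.\ via a Thompson-style or Antimirov-style construction --- encoded as a transition matrix $M$ together with start and accept vectors, and prove \emph{inside \KA} that $e$ is provably equivalent to the expression denoting the language of this automaton; here the least fixpoint axioms are precisely what is needed to solve the associated linear system (an algebraic form of Arden's rule) and to show its solution is unique up to $\equivka$. (ii) Show that the subset construction (determinisation) and subsequent minimisation can likewise be carried out so that the resulting deterministic, minimal automaton denotes a \KA-provably equivalent expression. (iii) Invoke the classical fact that the minimal DFA recognising a given regular language is unique up to isomorphism: if $\semka{e} = \semka{f}$, then $e$ and $f$ yield isomorphic minimal DFAs, and chaining the provable equivalences from steps (i)--(ii) for $e$, for $f$, and across the isomorphism gives $e \equivka f$.

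The main obstacle is steps (i)--(ii): making precise the claim that \KA ``knows enough automata theory.'' One has to set up matrices over a Kleene algebra, check that these again form a Kleene algebra (so that $^*$ of a matrix makes sense and satisfies the star axioms), and verify that each automaton-theoretic manipulation --- in particular state elimination and the subset construction --- corresponds to an equational derivation using only the \KA axioms, with the fixpoint axioms supplying the uniqueness needed to pin down the accepted language. Once this infrastructure is in place, the final appeal to uniqueness of the minimal DFA is short. Alternatively, one could replace the minimisation/uniqueness argument by a coinductive one, using Brzozowski derivatives together with a bisimulation-up-to-congruence argument to establish $e \equivka f$ directly whenever $\semka{e} = \semka{f}$.
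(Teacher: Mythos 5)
The paper does not prove this theorem at all: it is stated as a known result and attributed to Kozen~\cite{kozen-1994}, so there is no in-paper proof to compare against. Your outline --- routine soundness by verifying each \KA axiom (including the least fixpoint axioms, via the induction on $n$ you describe) in the language model, and completeness via matrices over a Kleene algebra, provable determinisation and minimisation, and uniqueness of the minimal DFA --- is a faithful summary of exactly the cited argument; the only caveat is that your steps (i)--(iii) are a plan for, rather than an execution of, the substantial matrix-algebra infrastructure that constitutes the real work in Kozen's proof.
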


\begin{remark}%
    \label{remark:extended-soundness-ka}
    The above can be generalised, as follows.
    Let $\mathcal{K} = (A, +, \cdot, ^*, 0, 1)$ be a KA, and let $\sigma: \Sigma \to A$.
    Then for all $e, f \in \termsreg$ such that $e \equivka f$, interpreting $e$ and $f$ according to $\sigma$ in $\mathcal{K}$ yields the same result.
    For instance, since ${(a^*)}^* \equivka a^*$, we know that for \emph{any} element $e$ of \emph{any} KA $\mathcal{K}$, we have that ${(e^*)}^* = e$.
\end{remark}

\paragraph{Linear Systems}\label{linearsystems}
Let $Q$ be a finite set. A \emph{$Q$-vector} is a function $x: Q\rightarrow\termsreg$. A \emph{$Q$-matrix} is a function $M: Q\times Q \rightarrow\termsreg$. Let $x$ and $y$ be $Q$-vectors. Addition is defined pointwise, setting $(x+y)(q)=x(q)+y(q)$. Multiplication by a $Q$-matrix $M$ is given by
\[
(M\cdot x)(q)=\sum_{e\in Q}M(q,e)\cdot x(e)
\]
When $x(q) \equivka y(q)$ for all $q \in Q$, we write $x \equivka y$.

\begin{definition}\label{definition:solution-linear}
A \emph{$Q$-linear system} is a pair $(M, x)$ with $M$ a $Q$-matrix and $x$ a $Q$-vector. A solution to $(M, x)$ in KA is a $Q$-vector $y$ such that $M\cdot y + x\equivka y$.
\end{definition}

\paragraph{Non-deterministic finite automata}
A \emph{non-deterministic automaton (NDA)} over an alphabet $\Sigma$ is a triple $(X,o,d)$
where $o \colon X \rightarrow 2$ is called the \emph{termination function} and $d \colon X \times \Sigma \rightarrow X$ called the \emph{continuation function}. If $X$ is finite, $(X,o,d)$ is referred to as a \emph{non-deterministic finite automaton (NFA)}.

The semantics of an NDA $(X,o,d)$ can be characterised recursively as the unique map $\ell: X \to \powerset{\Sigma^*}$ such that
\begin{equation}\label{uniquesemantics}
\ell(x) = \{ \varepsilon : o(x) = 1 \} \cup \bigcup_{x' \in d(x, a)} \{ a \} \cdot \ell(x')
\end{equation}
This coincides with the standard definition of language acceptance.

\section{Synchronous Kleene Algebra}\label{preliminaries:ska}
Synchronous Kleene algebra extends Kleene algebra with an additional operator denoted $\times$, which we refer to as the synchronous product~\cite{prisacariu}.

\begin{definition}[Synchronous Kleene Algebra]\label{definition:ska}
A \emph{synchronous KA (SKA)} is a tuple $(A, S, +, \cdot, ^*,\times, 0,1)$ such that $(A, +, \cdot, ^*, 0,1)$ is a Kleene algebra and $\times$ is a binary operator on $A$, with $S \subseteq A$ closed under $\times$ and $(S,\times)$ a semilattice.
Furthermore, the following hold for all $e,f,g\in A$ and $\alpha,\beta\in S$:
  \begin{align*}
  & e \times (f + g) = e \times f + e \times g
  && \quad e \times (f \times g) = (e \times f) \times g
  && \quad e \times 0 = 0 \\
  & (\alpha\cdot e)\times (\beta\cdot f) = (\alpha\times \beta)\cdot (e\times f)
  && \quad e \times f = f\times e
  && \quad e \times 1 =  e
\end{align*}
\end{definition}

\noindent Note that $0$ and $1$ need not be elements of $S$.
The \emph{semilattice terms}, denoted $\cross$, are given by the following grammar.
\[
    \cross \ni e,f :: = a\in\bact \pipe e\times f
\]
The \emph{synchronous regular terms}, denoted $\termssynreg$, are given by the grammar:
\[
    \termssynreg\ni e, f ::= 0 \pipe 1 \pipe a \in \cross \pipe e + f \pipe e \cdot f \pipe e\times f \pipe e^*
\]
Thus we have $\cross\subseteq\termssynreg$. We then define $\equivska$ as the smallest congruence on $\termssynreg$ satisfying the axioms of \SKA\@. Here, $\cross$ plays the role of the semilattice; for instance, for $a \in \cross$ we have that $a \times a \equivska a$.

\begin{remark}
In~\cite{prisacariu}, $\times$ is declared to be idempotent on the \emph{generators} of the semilattice, whereas in our definition it holds for semilattice elements in general. This does not change anything, as the axiom $a \times a = a$ for generators together with commutativity and associativity results in idempotence on the semilattice. We present SKA as in \autoref{definition:ska} to prevent a meta-definition of a third sort (namely the semilattice generated by $\bact$) present in the signature of the algebra. We have also left out the second distributivity and unit axioms that follow immediately from the ones presented and commutativity.
\end{remark}

\subsection{A Language Model for SKA}
Similar to Kleene algebra, there is a language model for SKA~\cite{prisacariu}.

Words over $\mathcal{P}(\bact)\setminus\{\emptyset\}=\mathcal{P}_n(\bact)$ are called \emph{synchronous strings}, and sets of synchronous strings are called \emph{synchronous languages}. The standard language operations (sum, concatenation, Kleene closure) are also defined on synchronous languages. The synchronous product of synchronous languages $K, L$ is given by:
\[
K\times L = \{u\times v : u\in K, v\in L\}
\]
where we define $\times$ inductively for $u, v \in {(\mathcal{P}_n(\bact))}^*$ and $x, y \in \mathcal{P}_n(\bact)$, as follows:
\begin{mathpar}
u\times \varepsilon = u = \varepsilon \times u
\and
\text{and}
\and
(x\cdot u)\times( y\cdot v) = (x\cup y)\cdot (u\times v)
\end{mathpar}

To define the language semantics for all elements in $\termssynreg$, we first give an interpretation of elements in $\cross$ in terms of non-empty finite subsets of $\bact$.
\begin{definition}
For $a\in\bact$ and $e,f\in\cross$, define $\semsem{-}:\cross\rightarrow \mathcal{P}_n(\bact)$ by
\begin{mathpar}
\semsem{a}=\{a\}
\and
\semsem{e\times f} = \semsem{e}\cup \semsem{f}
\end{mathpar}
\end{definition}

Denote the smallest congruence on $\cross$ with respect to idempotence, associativity and commutativity of $\times$ with $\equivsem$. It is not hard to show that $\semsem{-}$ characterises $\equivsem$, in the following sense.

\begin{lemma}[Soundness and Completeness of SL]\label{semilatticestuff}
For all $e,f\in\cross$, we have $\semsem{e}=\semsem{f}$ if and only if $e\equivsem f$.
\end{lemma}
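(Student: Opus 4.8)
The plan is to prove the two directions separately: soundness by a routine compositional induction, and completeness by reducing every semilattice term to a canonical normal form determined by its semantics.

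For soundness, I would first observe that $\cup$ on $\mathcal{P}_n(\bact)$ is itself idempotent, associative and commutative, so $\semsem{-}$ validates each of the three generating axioms: if $e$ and $f$ differ by a single application of idempotence, associativity or commutativity of $\times$, then $\semsem{e} = \semsem{f}$. Since $\semsem{-}$ is defined by recursion on the structure of $\cross$, it is compatible with the term constructor $\times$, and hence the set of pairs it identifies is a congruence containing the generators; as $\equivsem$ is the \emph{smallest} such congruence, $e \equivsem f$ implies $\semsem{e} = \semsem{f}$.

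For completeness, fix an arbitrary total order $a_1 < \cdots < a_k$ on the finite alphabet $\bact$, and to each nonempty $A = \{a_{i_1}, \dots, a_{i_n}\} \subseteq \bact$ with $i_1 < \cdots < i_n$ associate the canonical term $\mathrm{nf}(A) = a_{i_1} \times (a_{i_2} \times (\cdots \times a_{i_n}))$. The key lemma is that $e \equivsem \mathrm{nf}(\semsem{e})$ for every $e \in \cross$, proved by induction on $e$: the base case $e = a$ is immediate, and for $e = e_1 \times e_2$ the induction hypothesis gives $e \equivsem \mathrm{nf}(\semsem{e_1}) \times \mathrm{nf}(\semsem{e_2})$, after which one merges the two sorted duplicate-free chains into $\mathrm{nf}(\semsem{e_1} \cup \semsem{e_2}) = \mathrm{nf}(\semsem{e})$ using only the ACI equations. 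Granting this, if $\semsem{e} = \semsem{f}$ then $e \equivsem \mathrm{nf}(\semsem{e}) = \mathrm{nf}(\semsem{f}) \equivsem f$, so $e \equivsem f$ by transitivity and symmetry.

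The only mildly delicate point, and the place I would be most careful, is the merge step inside the key lemma: one must check that, working purely equationally with idempotence, associativity and commutativity of $\times$, a $\times$-combination of two sorted duplicate-free chains can be rewritten to the sorted duplicate-free chain over their union. This is cleanly handled by an auxiliary sub-induction establishing $a \times \mathrm{nf}(A) \equivsem \mathrm{nf}(\{a\} \cup A)$ for any $a \in \bact$ and nonempty $A$ — commutativity bubbles $a$ to its position in the chain and idempotence absorbs it when it already occurs — after which the general merge follows by induction on $|\semsem{e_1}|$. Everything else is bookkeeping.
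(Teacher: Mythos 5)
Your proof is correct. The paper does not actually supply a proof of this lemma (it only remarks that the characterisation ``is not hard to show''), and your argument --- soundness because $\semsem{-}$ is compositional and $\cup$ on $\mathcal{P}_n(\bact)$ validates the three ACI axioms, completeness by rewriting every term to the sorted duplicate-free chain $\mathrm{nf}(\semsem{e})$ via the insertion lemma $a \times \mathrm{nf}(A) \equivsem \mathrm{nf}(\{a\} \cup A)$ --- is exactly the standard argument one would expect to fill that gap, and it also in effect constructs the right inverse $\pipr{-}^{\Pi}$ that the paper later assumes.
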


The semantics of synchronous regular terms is given in terms of a mapping to synchronous languages:
$\semska{-}:\termssynreg \rightarrow \powerset{{(\mathcal{P}_n(\bact))}^*}$. We have:
 \[\small
 \begin{array}{@{}l@{\hspace{.5cm}}l@{\hspace{.5cm}}l@{}}
   \semska{0} = \emptyset\quad\semska{1} = \{\varepsilon\}
     & \semska{a} = \{ \semsem{a} \} \quad \forall a\in\cross
     & \semska{e^*} = \semska{e}^*\\[1ex]
\semska{e \cdot f} = \semska{e} \cdot \semska{f}
     & \semska{e + f} = \semska{e} + \semska{f}
     &  \semska{e \times f} = \semska{e} \times \semska{f}
 \end{array}
\]
A synchronous language $L$ is called \emph{regular} when $L=\semska{e}$ for some $e\in\termssynreg$.

Let $S=\{\{x\} : x\in \mathcal{P}_n(\bact)\}$, that is to say, $S$ is the set of synchronous languages consisting of a single word, whose single letter is in turn a subset of $\bact$. Furthermore, let $\synchronouslanguages$ denote the set of synchronous languages over $\bact$. It is straightforward to prove that $\synchronouslanguages$ together with $S$ is closed under the SKA operations and satisfies the SKA axioms~\cite{prisacariu}; more precisely, we have:

\begin{restatable}{lemma}{synlanska}%
\label{lemma:synlan_ska}
The structure $(\synchronouslanguages,S,+, \cdot ,^*,\times, \emptyset,\{\varepsilon\})$ is an SKA, that is, synchronous languages over $\bact$ form an SKA\@.
\end{restatable}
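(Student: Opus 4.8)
The plan is to check, in turn, the two parts of \autoref{definition:ska}: that the reduct $(\synchronouslanguages, +, \cdot, {}^*, \emptyset, \{\varepsilon\})$ is a Kleene algebra, and that $\times$ together with the set $S$ of singleton letter-languages satisfies the remaining SKA axioms. The first part is entirely standard: the set of all languages over any alphabet forms a Kleene algebra under union, concatenation and Kleene star, and instantiating the alphabet to $\mathcal{P}_n(\bact)$ gives exactly the required reduct. So the real content lies in the synchronous product, and I would isolate a small lemma about $\times$ on individual synchronous strings, from which the algebraic axioms follow by unfolding the pointwise definition $K \times L = \{u \times v : u \in K, v \in L\}$.

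First I would establish $S \subseteq \synchronouslanguages$ and that $(S, \times)$ is a semilattice. For single-letter words $x, y \in \mathcal{P}_n(\bact)$ the recursive clauses give $x \times y = (x \cup y) \cdot (\varepsilon \times \varepsilon) = x \cup y$, so $\{x\} \times \{y\} = \{x \cup y\} \in S$ (using that $\mathcal{P}_n(\bact)$ is closed under $\cup$); associativity, commutativity and idempotence of $\times$ on $S$ are then immediate from the corresponding properties of $\cup$. The purely set-theoretic axioms are equally short: $e \times 0 = 0$ since the product of any language with $\emptyset$ is $\emptyset$; $e \times (f+g) = e \times f + e \times g$ since $\{u \times v : u \in K, v \in K_1 \cup K_2\}$ splits as $\{u \times v : u \in K, v \in K_1\} \cup \{u \times v : u \in K, v \in K_2\}$; $e \times 1 = e$ directly from $u \times \varepsilon = u$; and the interaction axiom $(\alpha \cdot e) \times (\beta \cdot f) = (\alpha \times \beta) \cdot (e \times f)$ for $\alpha = \{x\}, \beta = \{y\} \in S$ by unfolding concatenation and applying $(x \cdot u) \times (y \cdot v) = (x \cup y) \cdot (u \times v)$ once.

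The only genuine work is commutativity and associativity of $\times$ on synchronous languages, which reduce to the corresponding facts about synchronous strings, $u \times v = v \times u$ and $(u \times v) \times w = u \times (v \times w)$ for all $u, v, w \in (\mathcal{P}_n(\bact))^*$; these I would prove by induction on the sum of the lengths of the arguments. The subtlety --- and the main obstacle --- is the padding built into the clause $u \times \varepsilon = u = \varepsilon \times u$: one has to dispatch the cases where some argument equals $\varepsilon$ separately (there the unit clauses collapse both sides directly), and only when all arguments have the form $x \cdot u'$ does one peel off a single letter, apply the recursive clause, and invoke the induction hypothesis together with commutativity/associativity of $\cup$ on letters. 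Given associativity of $\times$ on strings, associativity on languages, $\semska{e} \times (\semska{f} \times \semska{g}) = (\semska{e} \times \semska{f}) \times \semska{g}$, follows by a routine manipulation of set-builder notation, and likewise for commutativity.
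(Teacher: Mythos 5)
Your proposal is correct and follows essentially the same route as the paper's proof: reduce the KA axioms to the fact that languages over $\mathcal{P}_n(\bact)$ form a Kleene algebra, identify $(S,\times)$ with $(\mathcal{P}_n(\bact),\cup)$, handle the unit, annihilation, distributivity and synchrony axioms by unfolding the pointwise definition, and prove commutativity and associativity of $\times$ on synchronous strings by induction on the (combined) length with the $\varepsilon$ cases dispatched separately before lifting to languages. The only cosmetic difference is your induction measure (sum of lengths versus the paper's ``paired length''), which changes nothing.
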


As a consequence of \autoref{lemma:synlan_ska}, we obtain soundness of the SKA axioms with respect to the language model based on synchronous regular languages:

\begin{restatable}[Soundness of SKA]{lemma}{soundness}%
\label{lemma:soundness}
For all $e, f \in \termssynreg$, we have that $e \equivska f$ implies $\semska{e} = \semska{f}$.
\end{restatable}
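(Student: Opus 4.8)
The plan is to reduce this to \autoref{lemma:synlan_ska}. Write $e \approx f$ for the relation on $\termssynreg$ given by $\semska{e} = \semska{f}$. Since $\equivska$ is by definition the \emph{smallest} congruence on $\termssynreg$ satisfying the \SKA\ axioms, it suffices to show that $\approx$ is itself a congruence on $\termssynreg$ satisfying those axioms; by minimality we then get that $e \equivska f$ implies $e \approx f$, which is the claim. Conceptually this is the \SKA\ analogue of the fact that a provable \KA\ equation holds in every \KA\ under every interpretation: $\semska{-}$ is nothing but the interpretation of $\termssynreg$ in the \SKA\ $(\synchronouslanguages, S, +, \cdot, ^*, \times, \emptyset, \{\varepsilon\})$ induced by the assignment $a \mapsto \{\{a\}\}$ on $\bact$.

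Before the induction I would dispatch the one genuinely two-sorted point, the semilattice terms $\cross \subseteq \termssynreg$. First, $\semska{-}$ is well defined on $\cross$: for $e, f \in \cross$ the clause $\semska{e} = \{\semsem{e}\}$ and the general clause $\semska{e \times f} = \semska{e} \times \semska{f}$ must agree on $e \times f$, and unfolding the definition of $\times$ on synchronous strings gives $\{\semsem{e}\} \times \{\semsem{f}\} = \{\semsem{e} \cup \semsem{f}\} = \{\semsem{e \times f}\}$; in particular $\semska{\alpha} \in S$ for every $\alpha \in \cross$. Second, the \SKA\ axioms that are specific to the semilattice sort --- idempotence, commutativity and associativity of $\times$ on $\cross$ --- only equate $\equivsem$-equivalent elements of $\cross$, and $\semska{-}$ restricted to $\cross$ is the composite of $\semsem{-} : \cross \to \mathcal{P}_n(\bact)$ with $x \mapsto \{x\}$, which is $\equivsem$-invariant by \autoref{semilatticestuff}; hence $\approx$ contains every instance of those axioms.

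The rest is routine. The relation $\approx$ is an equivalence relation because equality of synchronous languages is, and it is compatible with $+$, $\cdot$, $^*$ and $\times$ because $\semska{-}$ is defined compositionally (for instance $\semska{e} = \semska{e'}$ yields $\semska{e \times f} = \semska{e} \times \semska{f} = \semska{e'} \times \semska{f} = \semska{e' \times f}$). Applying $\semska{-}$ turns each instance of an equational \SKA\ axiom into an instance of the corresponding law in $\synchronouslanguages$, valid by \autoref{lemma:synlan_ska}, and turns each instance of a least fixpoint rule into a Horn implication over synchronous languages that holds because $\synchronouslanguages$ is an \SKA, hence a \KA. I do not expect a real obstacle: this is the standard soundness induction, and the only delicate point is the two-sorted bookkeeping above, which ensures that the $\cross$-fragment of $\equivska$ is handled by \autoref{semilatticestuff} rather than slipping past the equational \SKA\ axioms.
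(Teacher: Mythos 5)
Your proposal is correct and is essentially the paper's own argument: the paper proves this by induction on the construction of $\equivska$, discharging the axioms via \autoref{lemma:synlan_ska} and the congruence-closure rules via compositionality of $\semska{-}$, which is exactly what your appeal to minimality of the smallest congruence amounts to. Your extra check that the two clauses defining $\semska{-}$ on $\cross$ cohere and that the semilattice axioms are absorbed by \autoref{semilatticestuff} is a welcome refinement of a point the paper leaves implicit, but it does not change the route.
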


\begin{remark}%
    \label{remark:extended-soundness-ska}
    The above generalises almost analogously to \autoref{remark:extended-soundness-ka}.
    Let $\mathcal{M}$ be an SKA with semilattice $S$, and let $\sigma: \Sigma \to S$ be a function.
    Then for all $e, f \in \termssynreg$ such that $e \equivska f$, if we interpret $e$ in $\mathcal{M}$ according to $\sigma$, then we should get the same result as when we interpret $f$ in $\mathcal{M}$ according to $\sigma$.

    In other words, when $e \equivska f$ holds, it follows that $e = f$ is a valid equation in \emph{every} SKA, provided that the symbols from $\Sigma$ are interpreted as elements of the semilattice.
    It is not hard to show that this claim does not hold when symbols from $\Sigma$ can be interpreted as elements of the carrier at large.
\end{remark}

\section{Incompleteness of SKA}\label{incompleteness}
We now prove incompleteness of the SKA\@ axioms as presented in~\cite{prisacariu}. Fix alphabet $\mathcal{A}=\{a\}$. First, note that the language model of SKA has the following property.

\begin{restatable}{lemma}{astar}\label{astar}
For $\alpha\in\cross$, we have $\semska{\alpha^* \times \alpha^*}=\semska{\alpha^*}$.
\end{restatable}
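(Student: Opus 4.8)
The plan is to reduce the claim to a small computation over a one-letter synchronous alphabet. Since we have fixed $\mathcal{A} = \{a\}$, the only available synchronous-string letter is $\{a\} \in \mathcal{P}_n(\mathcal{A})$; I will abbreviate it by $x$. The first step is to observe that \emph{every} semilattice term $\alpha \in \cross$ satisfies $\semsem{\alpha} = \{a\} = x$: this is a one-line induction on the structure of $\alpha$, using $\semsem{a} = \{a\}$ in the base case and $\semsem{e \times f} = \semsem{e} \cup \semsem{f}$ (so $\{a\} \cup \{a\} = \{a\}$) in the inductive case. Consequently $\semska{\alpha} = \{x\}$, and therefore
\[
\semska{\alpha^*} = \semska{\alpha}^* = \{x\}^* = \{\, x^n : n \in \mathbb{N} \,\},
\]
the set of all synchronous strings whose every letter is $x$ (here $x^n$ denotes the $n$-letter such string, with $x^0 = \varepsilon$).

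The second step is the key arithmetic fact about the synchronous product of these strings: for all $m, n \in \mathbb{N}$ we have $x^m \times x^n = x^{\max(m,n)}$. I would prove this by induction on $\min(m,n)$. If $\min(m,n) = 0$ then one of the two strings is $\varepsilon$, and the clause $u \times \varepsilon = u = \varepsilon \times u$ gives the result immediately. If $\min(m,n) > 0$ then both strings have the form $x \cdot u$, and the clause $(x \cdot u) \times (x \cdot v) = (x \cup x) \cdot (u \times v) = x \cdot (u \times v)$ reduces the goal to $x^{m-1} \times x^{n-1} = x^{\max(m-1,n-1)}$, which holds by the induction hypothesis since $\min(m-1,n-1) = \min(m,n) - 1$.

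The third step assembles the two inclusions. Unfolding the definition of the synchronous product on languages,
\[
\semska{\alpha^* \times \alpha^*} = \semska{\alpha^*} \times \semska{\alpha^*} = \{\, u \times v : u, v \in \semska{\alpha^*} \,\} = \{\, x^{\max(m,n)} : m, n \in \mathbb{N} \,\},
\]
using the previous step. Since $\max(m,n)$ ranges over all of $\mathbb{N}$ as $m, n$ do (take $m = n = k$), this set equals $\{\, x^k : k \in \mathbb{N} \,\} = \semska{\alpha^*}$, giving both inclusions at once. (Alternatively, $\supseteq$ follows directly from $\varepsilon \in \semska{\alpha^*}$ together with $w \times \varepsilon = w$, and only $\subseteq$ needs the $\max$ computation.)

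I do not expect a genuine obstacle here; the only mildly delicate point is lining up the induction for $x^m \times x^n = x^{\max(m,n)}$ with the clause-by-clause definition of $\times$ on synchronous strings, and making explicit the observation that over a one-letter alphabet $x \cup x = x$ — it is exactly this idempotence at the letter level that makes the two starred languages collapse onto one another.
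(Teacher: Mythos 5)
Your proof is correct and follows essentially the same route as the paper's: both reduce the forward inclusion to the fact that the synchronous product of two powers of a single letter $x$ is $x^{\max(m,n)}$ (the paper phrases this as $u\times(u\cdot x^{k})=(u\times u)\cdot x^{k}=u\cdot x^{k}$ after writing the longer word as an extension of the shorter one), and both get the reverse inclusion from $\varepsilon\in\semska{\alpha^*}$ and $w\times\varepsilon=w$. The only difference is that you specialise to the one-letter alphabet to force $\semsem{\alpha}=\{a\}$, which is unnecessary: the identical computation works for arbitrary $\alpha\in\cross$ over any alphabet by taking $x=\semsem{\alpha}$, since the only fact used is the letter-level idempotence $x\cup x=x$.
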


If $\equivska$ were complete w.r.t. $\semska{-}$, then the above implies that $a^*\times a^* \equivska a^*$ holds. In this section, we present a countermodel where all the axioms of SKA are true, but $\alpha^*\times \alpha^* = \alpha^*$ does not hold for
any $\alpha \in S$. This shows that $a^* \times a^* \not\equivska a^*$; consequently, $\equivska$ cannot be complete w.r.t. $\semska{-}$.

\subsection*{Countermodel for SKA}\label{section:countermodel}
We define our countermodel as follows. For the semilattice, let $S=\{\{\{ \modela \}\}\}$, the set containing the synchronous language $\{\{ \modela \}\}$. We denote the set of all synchronous languages over alphabet $\{ \modela \}$ with $\countermodel$; the carrier of our model is formed by $\countermodel \cup \{ \dagger \}$, where $\dagger$ is a symbol not found in $\countermodel$. The symbol $\dagger$ exists only in the model, and not in the algebraic theory. It remains to define the SKA operators on this carrier, which we do as follows.

\begin{definition}\label{definitionoperators}
An element of $\countermodel\cup\{\dagger\}$ is said to be infinite when it is an infinite language. For $K,L \in \countermodel\cup\{\dagger\}$, define the SKA operators as follows:
\[
    \begin{array}{rll}
        K + L
        &=
        \left\{
            \begin{array}{l}
                \dagger \\[1mm]
                K \cup L
            \end{array}
        \right.
        &
        \begin{array}{l}
            K = \dagger \vee L = \dagger \\[1mm]
            \text{otherwise}
        \end{array} \\[5mm]
        K \cdot L
        &=
        \left\{
            \begin{array}{l}
                \emptyset \\[1mm]
                \dagger \\[1mm]
                \{ u \cdot v : u \in K, v \in L \}
            \end{array}
        \right.
        &
        \begin{array}{l}
            K = \emptyset \vee L = \emptyset \\[1mm]
            K = \dagger \vee L = \dagger \\[1mm]
            \text{otherwise}
        \end{array} \\[8mm]
        K \times L 
        &=
        \left\{
            \begin{array}{l}
                \emptyset \\[1mm]
                \dagger \\[1mm]
                \{ u \times v : u \in K, v \in L \}
            \end{array}
        \right.
        &
        \begin{array}{l}
            K = \emptyset \vee L = \emptyset \\[1mm]
            K = \dagger \vee L = \dagger \vee K, L \text{ infinite} \\[1mm]
            \text{otherwise}
        \end{array} \\[8mm]
        K^*
        &=
        \left\{
            \begin{array}{l}
                \dagger \\[1mm]
                \bigcup_{n \in \mathbb{N}} K^n
            \end{array}
        \right.
        &
        \begin{array}{l}
            K = \dagger \\[1mm]
            \text{otherwise}
        \end{array}
    \end{array}
\]
where $u\times v$ for $u\in K$ and $v\in L$ and $K^n$ is as defined in \autoref{preliminaries:ska}.
Here, the cases are given in order of priority --- e.g., if $K=\emptyset$ and $L=\dagger$, then $K \cdot L = \emptyset$.
\end{definition}

The intuition behind this model is that SKA has no axioms that relate to the synchronous execution of starred expressions, such as in $\alpha^* \times \alpha^*$, nor can such a relation be derived from the axioms, meaning that a model has some leeway in defining the outcome in such cases. Since the language of a starred expression is generally infinite, we choose $\times$ such that it diverges to the extra element $\dagger$ when given infinite languages as input; for the rest of the operators, the behaviour on $\dagger$ is chosen to comply with the axioms.

First, we verify that our operators satisfy the SKA axioms.

\begin{restatable}{lemma}{countermodelis}\label{lemma:countermodelisska}
$\mathcal{M}=(\countermodel\cup\{\dagger\},\{\{\{ \modela \}\}\},+,\cdot,^*,\times,\emptyset,\{\varepsilon\})$ with the operators as defined in \autoref{definitionoperators} forms an SKA\@.
\end{restatable}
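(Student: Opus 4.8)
The plan is to verify each SKA axiom in turn for the structure $\mathcal{M}$, treating $\dagger$ as an absorbing-style element whose behaviour has been engineered precisely so that the axioms survive. Since the Kleene-algebra reduct and the synchronous-product fragment interact only through the axiom $(\alpha\cdot e)\times(\beta\cdot f)=(\alpha\times\beta)\cdot(e\times f)$, I would split the work into three blocks: (i) the KA axioms for $(+,\cdot,{}^*,\emptyset,\{\varepsilon\})$; (ii) the purely multiplicative-over-$\times$ axioms (associativity, commutativity, distributivity over $+$, $e\times 0=0$, $e\times 1=e$, and the semilattice condition on $S=\{\{\{\modela\}\}\}$); and (iii) the mixed axiom relating $\cdot$ and $\times$ on elements of $S$. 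Throughout, the key bookkeeping device is the priority ordering in \autoref{definitionoperators}: one shows that $\emptyset$ dominates $\dagger$ under $\cdot$ and $\times$, that $\dagger$ dominates all finite non-empty languages under $+,\cdot,\times$, and that on the $\dagger$-free part the operators are just the usual synchronous-language operations, which already form an SKA by \autoref{lemma:synlan_ska} (specialised to the one-letter alphabet $\{\modela\}$).

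Concretely, for block (i) I would observe that removing the clauses involving $\dagger$ leaves exactly the synchronous-language KA on $\countermodel$, so every KA axiom holds there by \autoref{lemma:synlan_ska}; it then remains to check the finitely many ``boundary'' cases in which at least one argument is $\dagger$ or $\emptyset$. For instance $\dagger+L=\dagger$ and $L\cup L=L$ give idempotence and the unit law for $+$; $\dagger\cdot\emptyset=\emptyset=\emptyset\cdot\dagger$ and $\dagger\cdot L=\dagger$ for $L\neq\emptyset$ give the annihilation and (together with $\{\varepsilon\}\cdot\dagger=\dagger$) unit laws for $\cdot$; associativity of $\cdot$ is checked by a short case split on which arguments are $\emptyset$, which are $\dagger$, and which are ordinary languages, using that $\emptyset$ has top priority. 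The star-unrolling axioms $e^*=1+e\cdot e^*=1+e^*\cdot e$ hold on $\countermodel$ as usual, and for $e=\dagger$ both sides evaluate to $\dagger$ since $\{\varepsilon\}+\dagger=\dagger$. The least-fixpoint quasi-identities are the delicate KA part: I would argue that if $e+f\cdot g\le g$ (meaning $(e+f\cdot g)\cup g=g$, or $=\dagger$ if $g=\dagger$) then either $g=\dagger$, in which case $f^*\cdot e\le\dagger$ trivially, or all of $e,f,g$ are ordinary languages with $f$ and $e$ both $\neq\emptyset$-or-handled-separately, and then the ordinary synchronous-language fixpoint law from \autoref{lemma:synlan_ska} applies. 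For block (ii), commutativity, associativity and $+$-distributivity of $\times$ reduce to the same case analysis; the only genuinely new wrinkle is the clause ``$K,L$ infinite $\Rightarrow K\times L=\dagger$'', so I must check that the axioms that mention $\times$ never force an infinite-times-infinite product to equal something other than $\dagger$. Associativity $(e\times f)\times g=e\times(f\times g)$ is where this bites: if two of the three are infinite the outcome is $\dagger$ on both sides; if at most one is infinite, all intermediate products stay finite (a synchronous product of a finite language with anything of the same finite cardinality is finite) and we are back in $\countermodel$. Distributivity $e\times(f+g)=e\times f+e\times g$ needs the observation that $f+g$ is infinite iff $f$ or $g$ is; and $e\times 0=0$, $e\times 1=e$ hold directly from the priority clauses (note $\{\varepsilon\}$ is finite, so $e\times\{\varepsilon\}$ never triggers the infinite clause). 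Finally $S=\{\{\{\modela\}\}\}$ is a one-element set closed under $\times$ with $\{\{\modela\}\}\times\{\{\modela\}\}=\{\{\modela\}\}$, trivially a semilattice. For block (iii), given $\alpha,\beta\in S$ we have $\alpha=\beta=\{\{\modela\}\}$, both finite, so $(\alpha\cdot e)\times(\beta\cdot f)$ is computed by the ordinary formula whenever $e,f$ are ordinary and non-problematic, and equals $(\alpha\times\beta)\cdot(e\times f)$ by \autoref{lemma:synlan_ska}; the cases $e=\dagger$ or $f=\dagger$ or $e=\emptyset$ or $f=\emptyset$ are handled by noting both sides collapse to $\dagger$ respectively $\emptyset$ in tandem.

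The main obstacle I anticipate is not any single axiom but the sheer combinatorial spread of boundary cases once $\dagger$ and $\emptyset$ are both in play, and in particular making sure the fixed priority order in \autoref{definitionoperators} is genuinely consistent — i.e. that wherever an axiom is an equation between two nested expressions, \emph{both} sides resolve their priorities to the same value. The cleanest way to manage this is to prove three small ``closure lemmas'' up front: (a) for $\cdot$ and $\times$, if either argument is $\emptyset$ the result is $\emptyset$, else if either is $\dagger$ the result is $\dagger$; (b) for $+$, if either argument is $\dagger$ the result is $\dagger$; (c) on $\countermodel$ the operators agree with the synchronous-language operators of \autoref{lemma:synlan_ska}, and a product/sum/star stays in $\countermodel$ (is $\dagger$-free) exactly under the stated non-$\dagger$, non-infinite-pair conditions. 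With these three lemmas in hand, every axiom verification becomes a mechanical two- or three-way split (``some argument is $\emptyset$'' / ``some argument is $\dagger$'' / ``everything ordinary''), and the ``everything ordinary'' branch is discharged wholesale by \autoref{lemma:synlan_ska}. I would relegate the full case enumeration to the appendix, stating in the main text only the closure lemmas and the observation that the infinite-times-infinite clause is never activated inside an axiom instance except when the other side also evaluates to $\dagger$.
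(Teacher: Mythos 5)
Your overall strategy --- a case split on whether each operand is $\emptyset$, $\dagger$, or an ordinary language, with the ordinary branch discharged by \autoref{lemma:synlan_ska} --- is exactly the paper's, and most of your boundary-case analysis (units, annihilation, star unrolling at $\dagger$, the synchrony axiom on the singleton semilattice) matches the paper's appendix. The ``closure lemmas'' you propose are a reasonable repackaging of the same case analysis.

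There is, however, one concrete flaw, and it sits precisely at the point you flag as delicate: associativity of $\times$. You claim that ``if at most one \lbrack of $K,L,J$\rbrack\ is infinite, all intermediate products stay finite.'' That is false: if $K$ is finite and nonempty and $L$ is infinite, then $K\times L$ is infinite, because the length of $u\times v$ is $\max(|u|,|v|)$, so products against words of unbounded length have unbounded length. Worse, the case you dispose of in one clause --- ``if two of the three are infinite the outcome is $\dagger$ on both sides'' --- actually \emph{depends} on this infiniteness fact, which your parenthetical points away from. Take $K,L$ infinite and $J$ finite and nonempty: the left-hand side is $(K\times L)\times J=\dagger\times J=\dagger$, but the right-hand side $K\times(L\times J)$ evaluates to $\dagger$ only because $L\times J$ is itself infinite, so that the ``both operands infinite'' clause fires on the outer product; if $L\times J$ could be finite, associativity would fail. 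The paper isolates exactly this fact as a preliminary lemma (a nonempty finite language synchronously multiplied with an infinite language is infinite) and invokes it in the associativity case; your write-up needs the same lemma, stated in the correct direction, and the ``at most one infinite'' branch should be justified not by finiteness of intermediate products but by the observation that no intermediate product ever has two infinite operands. A smaller omission: in the least-fixpoint quasi-identities with the bounding element an ordinary language, you jump to ``all of $e,f,g$ are ordinary'', skipping the subcase where the coefficient is $\dagger$; there one must argue, as the paper does, that the hypothesis forces the bound to be $\emptyset$ and hence the conclusion to hold trivially.
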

\begin{proof}
For the sake of brevity, we validate one of the least fixpoint axioms and the synchrony axiom; the other axioms are treated in the appendix.

Let $K,L,J\in\countermodel\cup\{\dagger\}$. We verify that $K+L\cdot J \leq J \implies L^*\cdot K\leq J$. Assume that $K+L\cdot J\leq J$. If $J=\dagger$, then the result follows by definition of $\leq$ and our choice of $+$. Otherwise, if $J\in \countermodel $, we distinguish two cases. If $L=\dagger$, then $J$ must be $\emptyset$ (otherwise $J = \dagger$); hence $K = \emptyset$, and the claim holds. Lastly, if $L\in\countermodel$, then $K\in\countermodel$. In this case, all of the operands are languages, and thus the proof goes through as it does for KA\@.

For the synchrony axiom, we need only check
\[
    (A\cdot K)\times (A\cdot L)=(A\times A)\cdot (K\times L)
\]
for $A=\{\{ \modela \}\}$ as that is the only element in $S$. Let $K,L\in \countermodel\cup\{\dagger\}$. If either $K$ or $L$ is $\emptyset$, both sides of the equation reduce to $\emptyset$. Otherwise, if $K$ or $L$ is $\dagger$, then both sides of the equation reduce to $\dagger$.
If $K$ and $L$ are both infinite then $A\cdot K$ and $A\cdot L$ are infinite and the claim follows. In all the remaining cases where $K$ and $L$ are elements of $\countermodel$ and at most one of them is infinite, the proof goes through as it does for synchronous regular languages (\autoref{lemma:soundness}).
\qed
\end{proof}

This leads us to the following theorem:
\begin{theorem}
The axioms of SKA presented in \autoref{definition:ska} are incomplete. That is, there exist
$e,f\in\termssynreg$ such that $\semska{e}=\semska{f}$ but $e\notequivska f$.
\end{theorem}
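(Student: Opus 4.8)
The plan is to instantiate the abstract incompleteness claim with the concrete witnesses $e = a^* \times a^*$ and $f = a^*$ over the fixed alphabet $\mathcal{A} = \{a\}$, and to separate them using the countermodel $\mathcal{M}$. Since $a \in \mathcal{A} \subseteq \cross$, \autoref{astar} (with $\alpha = a$) immediately yields $\semska{a^* \times a^*} = \semska{a^*}$, so the two terms are semantically equal. It therefore remains only to show $a^* \times a^* \notequivska a^*$, and for that it suffices to exhibit one SKA in which the corresponding equation fails under a semilattice-valued interpretation.

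The proof proceeds by contradiction. Suppose $a^* \times a^* \equivska a^*$. By \autoref{lemma:countermodelisska}, $\mathcal{M} = (\countermodel \cup \{\dagger\}, \{\{\{\modela\}\}\}, +, \cdot, ^*, \times, \emptyset, \{\varepsilon\})$ is an SKA with semilattice $S = \{\{\{\modela\}\}\}$. Let $\sigma$ send the letter $a$ to the unique element $\{\{\modela\}\}$ of $S$; by the extended soundness statement (\autoref{remark:extended-soundness-ska}), the assumed provable equation must then hold in $\mathcal{M}$ under $\sigma$. I would now evaluate both sides. The value of $a^*$ is $\sigma(a)^* = \bigcup_{n \in \mathbb{N}} \{\{\modela\}\}^n = \{\varepsilon, \{\modela\}, \{\modela\}\{\modela\}, \dots\}$, an infinite element of $\countermodel$ (in particular, not $\emptyset$ and not $\dagger$). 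Hence, by the second clause in the definition of $\times$ in \autoref{definitionoperators} (both operands infinite), the value of $a^* \times a^*$ is $\dagger$. But $\dagger \notin \countermodel$ whereas $\sigma(a)^* \in \countermodel$, so these two values differ --- contradicting the conclusion that the equation holds in $\mathcal{M}$. Therefore $a^* \times a^* \notequivska a^*$, and combined with the semantic equality above this proves incompleteness.

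There is no remaining substantial difficulty: the two technical pillars --- that $\mathcal{M}$ satisfies every SKA axiom (\autoref{lemma:countermodelisska}) and that $\semska{\alpha^* \times \alpha^*} = \semska{\alpha^*}$ (\autoref{astar}) --- are already in place. The only points that require care are (i) invoking the \emph{extended} soundness of \autoref{remark:extended-soundness-ska} rather than plain soundness, since $\mathcal{M}$ is not the standard synchronous-language model and we crucially need the letter $a$ to be interpreted inside the semilattice $S$ (the same equation need not transfer to interpretations into the carrier at large); and (ii) noting explicitly that $\dagger$ lies outside $\countermodel$, which is precisely what makes the two computed values distinct.
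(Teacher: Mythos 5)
Your proposal is correct and follows exactly the same route as the paper: it uses \autoref{astar} for the semantic identity, \autoref{lemma:countermodelisska} together with the extended soundness of \autoref{remark:extended-soundness-ska} to transfer a hypothetical derivation $a^* \times a^* \equivska a^*$ into the countermodel, and then computes that the left-hand side evaluates to $\dagger$ while the right-hand side stays in $\countermodel$. The only difference is that you spell out the evaluation of ${\{\{\modela\}\}}^* \times {\{\{\modela\}\}}^* = \dagger$ via the infinite-operands clause of \autoref{definitionoperators}, which the paper leaves as a one-line calculation.
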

\begin{proof}
Take $a\in\mathcal{A}$. We know from \autoref{astar} that $\semska{a^* \times a^*}=\semska{a^*}$.
Now suppose $a^*\times a^*\equivska a^*$. As our countermodel is an SKA that means in particular that ${\{\{ \modela \}\}}^* \times {\{\{ \modela \}\}}^* = {\{\{ \modela \}\}}^*$ should hold (c.f. \autoref{remark:extended-soundness-ska}).
However, in this model we can calculate that ${\{\{ \modela \}\}}^* \times {\{\{ \modela \}\}}^* = \dagger\neq {\{\{ \modela \}\}}^*$.
Hence, we have a contradiction. Thus $a^*\times a^*\notequivska a^*$, rendering SKA incomplete.\qed
\end{proof}

\section{A new axiomatisation}\label{newaxiomatisation}
We now create an alternative algebraic formalism, which we call \SF, and prove that its axioms are sound and complete w.r.t the model of synchronous regular languages. Whereas the definition of SKA relies on Kleene algebras (with \emph{least fixpoint axioms}) as presented by Kozen~\cite{kozen-1994}, the definition of $\SF$ builds on $\F$-algebras (with a \emph{unique fixpoint axiom}) as presented by Salomaa~\cite{salomaa}. The axioms of Salomaa are strictly stronger than Kozen's~\cite{struthfoster}, and we will see that the unique fixpoint axiom allows us to derive a connection between the synchronous product and the Kleene star, even though this connection is not represented in an axiom directly (see \autoref{heiligvoorbeeld}).

\begin{definition}
An \emph{$\F$-algebra}~\cite{salomaa} is a tuple $(A,+,\cdot,^*,0,1,H)$ where $A$ is a set, $^*$ is a unary operator, $+$ and $\cdot$ are binary operators and $0$ and $1$ are constants, and such that for all $e,f,g\in A$ the following axioms are satisfied:
\begin{align*}
& e + (f + g) = (e + f) + g
&& e + f = f + e
&& e + 0 = e \qquad e + e = e
\\
&e \cdot 1 = e = 1 \cdot e
&& e \cdot 0 = 0 = 0 \cdot e
&&e \cdot (f \cdot g) = (e \cdot f) \cdot g
\\
&e^* = 1 + e \cdot e^* = 1 + e^* \cdot e
&& (e + f) \cdot g = e \cdot g + f \cdot g
&& e \cdot (f + g) = e \cdot f + e \cdot g
\end{align*}
Additionally, the \emph{loop tightening} and \emph{unique fixpoint axiom} hold:
\begin{align*}
& {(e + 1)}^* = e^*
&& H(f) = 0 \wedge e + f \cdot g = g \implies f^* \cdot e = g
\end{align*}
Lastly, we have the following axioms for $H$:
\begin{align*}
& H(0) = 0
&& H(e + f) = H(e) + H(f)
&& H(e^*)={(H(e))}^* \\
& H(1) = 1
&& H(e \cdot f) = H(e) \cdot H(f)
\end{align*}
\end{definition}
In~\cite{salomaa}, an $e \in A$ with $H(e) = 1$ is said to have the \emph{empty word property}, which will be reflected in the semantic interpretation of $H(e)$ stated below.

The set of \emph{$\F$-expressions}, denoted $\termsregf$, is described by:
\[
    \termsregf\ni e,f ::= 0 \pipe 1 \pipe a \in \Sigma \pipe e + f \pipe e \cdot f \pipe e^* \pipe H(e)
\]
We can interpret $\F$-expressions in terms of languages through $\semf{-}: \termsregf \to \powerset{\Sigma^*}$, defined analogously to $\semka{-}$, where furthermore for $e \in \termsregf$ we have
\[
    \semf{H(e)} = \semf{e} \cap \{ \varepsilon \}
\]

We write $\equivf$ for the smallest congruence on $\termsregf$ induced by the $\F$-axioms.
Additionally, we require that for $a \in \Sigma$, we have $H(a) \equivf 0$.
A characterisation similar to \autoref{theorem:sandcka} can then be established as follows\footnote{Unlike~\cite{salomaa}, we include $H$ in the syntax; one can prove that for any $e \in \termsregf$ it holds that $H(e) \equiv 0$ or $H(e) \equiv 1$, and hence any occurence of $H$ can be removed from $e$. This is what allows us to apply the completeness result from op.\ cit.\ here.}:
\begin{theorem}[Soundness and Completeness of \F~\cite{salomaa}]\label{salomaacomplete}
For all $e, f \in \termsregf$, we have that $e \equivf f$ if and only if $\semf{e} = \semf{f}$.
\end{theorem}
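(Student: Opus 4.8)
The plan is to reduce this statement to the completeness theorem of Salomaa~\cite{salomaa} by showing that every occurrence of the symbol $H$ can be eliminated from an $\F$-expression, modulo $\equivf$. First I would establish the following key lemma: for every $e \in \termsregf$, either $H(e) \equivf 0$ or $H(e) \equivf 1$. This is proved by induction on the structure of $e$, using the axioms governing $H$: the base cases are $H(0) \equivf 0$, $H(1) \equivf 1$, and $H(a) \equivf 0$ for $a \in \Sigma$ (the latter being an explicit requirement we imposed); and the inductive cases follow from $H(e+f) \equivf H(e)+H(f)$, $H(e\cdot f) \equivf H(e)\cdot H(f)$, and $H(e^*) \equivf {(H(e))}^*$, together with the observations that $0+0 \equivf 0$, $0+1 \equivf 1+0 \equivf 1+1 \equivf 1$, $0\cdot 1 \equivf 1\cdot 0 \equivf 0\cdot 0 \equivf 0$, $1\cdot 1 \equivf 1$, $0^* \equivf 1$, and $1^* \equivf 1$ (this last using loop tightening, or directly $1^* = 1 + 1\cdot 1^* \equivf 1 + 1^*$, whence $1 \leq 1^*$, combined with $1^* \leq 1$). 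A subtle point: the case $H(H(e))$ does not arise, since the grammar builds $H(e)$ only from $e \in \termsregf$, but $H$ itself may appear nested inside $e$; the induction still goes through because the induction hypothesis applies to all proper subexpressions, and $H(e) \equivf H(f)$ whenever $e \equivf f$ by congruence.

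Next I would define a translation $(-)^\circ : \termsregf \to \termsreg$ that pushes $H$ inward and replaces each maximal $H$-subexpression by $0$ or $1$ according to the lemma, proving by induction that $e \equivf e^\circ$ for every $e$, and dually that the language semantics is preserved, i.e.\ $\semf{e} = \semka{e^\circ}$ — the latter using $\semf{H(e)} = \semf{e} \cap \{\varepsilon\}$ and the fact that $\semf{e} \cap \{\varepsilon\}$ equals $\semka{0} = \emptyset$ or $\semka{1} = \{\varepsilon\}$ in accordance with whether $H(e) \equivf 0$ or $H(e) \equivf 1$. One must check that these two facts (syntactic provability of the reduction to $0$/$1$, and the corresponding semantic value) agree; this follows because $\equivf$ is sound for $\semf{-}$ on the $H$-free fragment, which is itself either part of the hypothesis or an easy soundness check.

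With the translation in hand, soundness of $\equivf$ is routine: one verifies that each $\F$-axiom, including loop tightening and the unique fixpoint axiom, holds in the language model $(\powerset{\Sigma^*}, +, \cdot, ^*, \emptyset, \{\varepsilon\}, (-)\cap\{\varepsilon\})$ — these are the standard verifications from~\cite{salomaa}, noting that $H(f) = 0$ semantically means $\varepsilon \notin \semf{f}$, which is exactly Salomaa's empty-word-property side condition. For completeness, suppose $\semf{e} = \semf{f}$. Then $\semka{e^\circ} = \semf{e} = \semf{f} = \semka{f^\circ}$, so by \autoref{salomaacomplete} as stated for Salomaa's $H$-free system — or rather by the completeness half of it, which is the content of~\cite{salomaa} — we get $e^\circ \equivf f^\circ$ (the $\F$-axioms include all of Salomaa's axioms on the $H$-free fragment). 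Combined with $e \equivf e^\circ$ and $f \equivf f^\circ$ from the translation, transitivity gives $e \equivf f$.

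The main obstacle I anticipate is not any single hard step but rather the bookkeeping around the interaction of $H$ with itself and with the unique fixpoint axiom: one must be careful that the side condition $H(f) = 0$ in the unique fixpoint axiom, after translation, becomes the genuine side condition $\varepsilon \notin \semf{f}$ needed to invoke Salomaa's theorem, and that no circularity is introduced (we use soundness of $\equivf$ on the $H$-free fragment to pin down the semantic value of each eliminated $H$-subexpression, then use completeness of that same fragment — this is fine since both directions for the $H$-free fragment are exactly \autoref{salomaacomplete} restricted appropriately, or provable directly). The structural induction for the $H$-elimination lemma is the technical heart; everything else is assembly.
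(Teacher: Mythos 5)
Your proposal is correct and follows essentially the same route as the paper: this theorem is imported from Salomaa, and the paper's footnote sketches exactly your key lemma --- that $H(e) \equivf 0$ or $H(e) \equivf 1$ for every $e \in \termsregf$, so that $H$ can be eliminated and the completeness result of~\cite{salomaa} applies. Your write-up simply fills in the details of that sketch (the inductive case analysis, the translation $(-)^\circ$, and the check that the syntactic value of each eliminated $H$-subexpression agrees with its semantics).
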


\begin{remark}%
    \label{remark:extended-soundness-f1}
    Kozen~\cite{kozen-1994} noted that the above does not generalise along the same lines as in \autoref{remark:extended-soundness-ka}.
    In particular, the axiom $H(a) \equivska 0$ is not stable under substitution; for instance, if we interpret $H(a)$ according to the valuation $a \mapsto \{ \epsilon \}$ in the $\F$-algebra of languages, then we obtain $\{ \epsilon \}$, whereas $0$ is interpreted as $\emptyset$.
\end{remark}

\begin{definition}
A \emph{synchronous \F-algebra} (\emph{$\SF$-algebra} for short) is a tuple $(A,S,+,\cdot,^*,0,1,H)$, such that $(A, +, \cdot,^*, 0, 1, H)$ is an $\F$-algebra and $\times$ is a binary operator on $A$, with $S \subseteq A$ closed under $\times$ and $(S, \times)$ a semilattice. Furthermore, the following hold for all $e, f, g \in A$ and $\alpha,\beta\in S$:
\begin{align*}
  & e \times (f + g) = e \times f + e \times g
  && \quad e \times (f \times g) = (e \times f) \times g
  && \quad e \times 0 = 0 \\
  & (\alpha\cdot e)\times (\beta\cdot f) = (\alpha\times \beta)\cdot (e\times f)
  && \quad e \times f = f\times e
  && \quad e \times 1 =  e
\end{align*}
Moreover, $H$ is compatible with $\times$ as well, i.e., for $e, f \in A$ we have that $H(e \times f) = H(e) \times H(f)$.
Lastly, for $\alpha \in S$ we require that $H(\alpha) = 0$.
\end{definition}

\begin{remark}
The countermodel from \autoref{section:countermodel} cannot be extended to a model of $\SF$.
To see this, note that we have $H(\{\{\modela\}\}) = 0$ and $\emptyset + \{\{\modela\}\} \cdot \dagger = \dagger$, but ${\{\{\modela\}\}}^* \cdot \emptyset \neq \dagger$ --- contradicting the unique fixpoint axiom.
\end{remark}

\noindent The set of \emph{$\SF$-expressions} over $\bact$, denoted $\termssf$, is described by:
\[
    \termssf\ni e,f ::= 0 \pipe 1 \pipe a \in \cross \pipe e + f \pipe e \cdot f \pipe e\times f \pipe e^* \pipe H(e)
\]
We interpret $\termssf$ in terms of languages via $\semsf{-}: \termssf \rightarrow \synchronouslanguages$, defined analogously to $\semska{-}$, where furthermore for $e\in\termssf$ we have
\[
\semsf{H(e)} = \semsf{e}\cap \{\varepsilon\}
\]
Note that when $e \in \termssynreg$, then $e \in \termssf$ and $\semska{e} = \semsf{e}$.

Define $\equivsf$ as the smallest congruence on $\termssf$ induced by the axioms of \SF, where $\cross$ fulfills the role of the semilattice --- e.g., if $a \in \cross$, then $a \times a \equivsf a$. This axiomatisation is sound with respect to the language model.\footnote{
    Note that for the synchronous language model we know the least fixpoint axioms are sound as well (\autoref{lemma:soundness}). However, there might be other $\SF\text{-models}$ where the least fixpoint axioms are not valid.
}

\begin{restatable}{lemma}{soundnessska}\label{soundness:unique}
Let $e,f\in\termssf$. If $e\equivsf f$ then $\semsf{e}=\semsf{f}$.
\end{restatable}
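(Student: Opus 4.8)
The plan is to prove soundness of the $\SF$ axioms by the standard recipe: exhibit the synchronous language model as a concrete $\SF$-algebra, and invoke the fact that any congruence generated by a set of axioms is contained in the corresponding semantic equivalence whenever the model validates all those axioms. Concretely, I would first observe that $(\synchronouslanguages, S, +, \cdot, ^*, \times, \emptyset, \{\varepsilon\})$ — with $S$ the set of singletons $\{\{x\}\}$ for $x \in \mathcal{P}_n(\bact)$ — already satisfies the $\SKA$ axioms by \autoref{lemma:synlan_ska}, so the $\F$-algebra fragment of the Kleene-algebra part is almost for free once we add the interpretation of $H$ as $\semsf{H(e)} = \semsf{e} \cap \{\varepsilon\}$.

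The remaining work is to check the genuinely new axioms. First, I would verify that $\semsf{-}$ interpreted in $\synchronouslanguages$ makes it an $\F$-algebra: the Kleene-algebra equational axioms hold by \autoref{lemma:synlan_ska}; loop tightening ${(L + \{\varepsilon\})}^* = L^*$ is a routine fact about languages; the homomorphism-like axioms for $H$ ($H(0)=0$, $H(1)=1$, $H(e+f)=H(e)+H(f)$, $H(e\cdot f)=H(e)\cdot H(f)$, $H(e^*)={(H(e))}^*$) follow by a short case analysis on whether $\varepsilon$ belongs to each operand language; and crucially the \emph{unique fixpoint axiom} — if $H(L_f) = \emptyset$ (i.e.\ $\varepsilon \notin L_f$) and $L_e + L_f \cdot L_g = L_g$, then $L_f^* \cdot L_e = L_g$ — is Arden's lemma, whose uniqueness direction is exactly where $\varepsilon \notin L_f$ is needed. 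Then I would check the $\SF$-specific additions: the synchronous-product axioms (distributivity, associativity, commutativity, $L \times \emptyset = \emptyset$, $L \times \{\varepsilon\} = L$, and the synchrony axiom $(\{\{x\}\} \cdot K) \times (\{\{y\}\} \cdot L) = (\{\{x\}\} \times \{\{y\}\}) \cdot (K \times L)$) all hold in $\synchronouslanguages$ by \autoref{lemma:synlan_ska}; compatibility $H(K \times L) = H(K) \times H(L)$ is another short case analysis ($\varepsilon \in K \times L$ iff $\varepsilon \in K$ and $\varepsilon \in L$, using $u \times \varepsilon = u$); and $H(\alpha) = \emptyset$ for $\alpha \in S$ holds because every element of $S$ is a single nonempty word. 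Finally, the side condition imposed on $\equivsf$ that $H(a) \equivsf 0$ for $a \in \cross$ must be matched semantically, which again holds since $\semsf{a} = \{\semsem{a}\}$ is a single one-letter word, so $\semsf{H(a)} = \emptyset = \semsf{0}$.

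Having established that $\synchronouslanguages$ with these interpretations is an $\SF$-algebra in which moreover the defining side condition $H(a) \equivsf 0$ is semantically valid, the conclusion is immediate: $\equivsf$ is by definition the \emph{smallest} congruence on $\termssf$ closed under the $\SF$-axioms and the side condition, and $\{(e,f) : \semsf{e} = \semsf{f}\}$ is such a congruence, hence $e \equivsf f$ implies $\semsf{e} = \semsf{f}$. I would package the first two paragraphs as a lemma "$\synchronouslanguages$ forms an $\SF$-algebra" (the analogue of \autoref{lemma:synlan_ska}), relegating the bulk of the routine verifications to the appendix, and keep only the Arden-lemma uniqueness argument and the $H$-compatibility checks in the main text.

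The main obstacle — really the only non-mechanical point — is the uniqueness half of the fixpoint axiom (Arden's lemma): one must show that when $\varepsilon \notin L_f$, the equation $L_e + L_f \cdot X = X$ has $L_f^* \cdot L_e$ as its \emph{unique} solution, rather than merely its least. This is a classical argument (if $X$ is any solution, unfold it $n$ times to get $X = (\sum_{i<n} L_f^i)\cdot L_e + L_f^n \cdot X$, and observe that any word of length $k$ in $X$ lies in $(\sum_{i \le k} L_f^i) \cdot L_e = (\sum_{i\le k} L_f^i)\cdot L_e + L_f^{k+1}\cdot X$ since words in $L_f^{k+1}\cdot X$ have length $> k$ as $\varepsilon \notin L_f$), but it is the one place where a genuine induction on word length, rather than a symbol-pushing case split, is required. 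Everything else reduces either to \autoref{lemma:synlan_ska} or to elementary observations about whether $\varepsilon$ lies in a given synchronous language.
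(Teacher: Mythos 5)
Your proposal is correct and follows essentially the same route as the paper: exhibit $\synchronouslanguages$ as an $\SF$-algebra, reducing the synchronous-product axioms to \autoref{lemma:synlan_ska}/\autoref{lemma:soundness} and the $\F$-fragment (including Arden's lemma for the unique fixpoint axiom) to the classical fact that languages form an $\F$-algebra, then conclude by minimality of the congruence $\equivsf$. If anything you are more careful than the paper's very terse proof, which does not explicitly address the two axioms that belong to neither group --- $H(e \times f) = H(e) \times H(f)$ and $H(\alpha) = 0$ for $\alpha \in S$ --- both of which you verify directly.
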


\begin{remark}
    The caveat from \autoref{remark:extended-soundness-f1} can be transposed to this setting.
    However, the condition that for $\alpha \in S$ we have that $H(\alpha) = 0$ allows one to strengthen the above along the same lines as \autoref{remark:extended-soundness-ska}, that is, if $e \equivsf f$, then interpreting $e$ and $f$ in some SKA according to some valuation of $\Sigma$ in terms of semilattice elements will produce the same outcome.
\end{remark}

\begin{remark}\label{heiligvoorbeeld}
To demonstrate the use of the new axioms, we give an algebraic proof of $\alpha^*\times \alpha^*\equivsf \alpha^*$ for $\alpha\in \cross$:
\begin{align*}
  \alpha^*\times \alpha^* & \equivsf (1+\alpha\cdot \alpha^*)\times (1+\alpha\cdot \alpha^*) \equivsf 1+\alpha\cdot \alpha^* + (\alpha\cdot \alpha^*)\times (\alpha\cdot \alpha^*) \\
  &\equivsf 1+ \alpha\cdot \alpha^* + (\alpha\times \alpha)\cdot (\alpha^*\times \alpha^*) \equivsf \alpha^* + \alpha\cdot (\alpha^*\times \alpha^*)
\end{align*}
Since $H(\alpha) = 0$, we can apply the unique fixpoint axiom to find $\alpha^*\cdot \alpha^*\equivsf \alpha^*\times \alpha^*$. In $\SF$, it is not hard to show that $\alpha^*\cdot \alpha^* \equivf \alpha^*$; hence, we find $\alpha^*\times \alpha^*\equivsf \alpha^*$.
\end{remark}

\begin{remark}\label{isditeenremark?}
Adding $\alpha^*\times \alpha^* = \alpha^*$ for $\alpha\in \cross$ as an axiom to the old axiomatisation of SKA would not have been sufficient; one can easily find another semantical truth that does not hold in our countermodel, such as $\semska{{(\alpha\cdot\beta)}^*\times{(\alpha\cdot\beta)}^*}=\semska{{(\alpha\cdot\beta)}^*}$. Adding $e^*\times e^* = e^*$ as an axiom is also not an option, as this is not sound; for instance, take $e=a+b$ for $a,b\in\bact$. In order to keep the axiomatisation finitary, a unique fixpoint axiom provided an outcome.
\end{remark}

\subsection{Partial Derivatives}
In this section we develop the theory of SKA and set up the necessary machinery for \autoref{fundamental} and the completeness proof in \autoref{completeness}. We start by presenting partial derivatives, which provide a termination and continuation map on $\termssf$. These derivatives allow us to turn the set of synchronous regular terms into a non-deterministic automaton structure, such that the language accepted by $e\in\termssf$ as a state in this automaton is the same as the semantics of $e$. Furthermore, partial derivatives turn out to provide a way to algebraically characterise a term by means of acceptance and reachable terms, which is useful in the completeness proof of \SF.\@

The termination and continuation map for $\SF$-expressions presented below are a trivial extension of the ones from~\cite{broda}. Intuitively, the termination map is $1$ if an expression can immediately terminate, and $0$ otherwise; the continuation map of a term w.r.t.\ $A$ gives us the set of terms reachable with an $A$-step.

\begin{definition}[Termination map]
  For $a\in\bact$, we define $\continuation: \termssf \to 2$ inductively, as follows:
  \begin{align*}
  \continuation(0) &= 0  & \continuation(e^*) &= 1   & \continuation(e + f)     &=
  \max(\continuation(e),\continuation(f)) & \continuation (e \times f) &= \min(\continuation(e), \continuation(f)) \\
  \continuation(1) &= 1  & \continuation(a) &= 0  & \continuation(e \cdot f) &= \min(\continuation(e), \continuation(f)) & \continuation (H(e)) & = \continuation (e)
  \end{align*}
\end{definition}

\begin{definition}[Continuation map]
  For $a\in\bact$, we inductively define \quad\quad $\delta: \termssf\times\mathcal{P}_n(\bact) \to \powerset{\termssf}$ as follows:
\begin{align*}
\delta(0, A) &= \delta(1,A) = \emptyset
&  \delta(e \times f, A) &= \Delta(e,f,A)\cup \Delta(f,e,A)\\
\delta(H(e), A) &= \emptyset
   && \cup \{ e' \times f' : e' \in \delta(e, B_1), \\
\delta(a, A) &= \{ 1 : A=\{a\} \}
&& f'\in\delta(f,B_2), B_1\cup B_2=A \}  \\
\delta(e^*, A) &= \{ e' \cdot e^* : e' \in \delta(e, A) \}
& \delta(e \cdot f, A)&= \{ e' \cdot f : e' \in \delta(e, A) \} \\
 \delta(e + f, A) &= \delta(e, A) \cup \delta(f, A)
 &&  \cup \Delta(f,e,A)
\end{align*}
where $\Delta(e, f, A)$ is defined to be $\delta(e, A)$ when $\continuation(f) = 1$, and $\emptyset$ otherwise.
\end{definition}

\begin{definition}[Syntactic Automaton]\label{def:synaut}
We call the NDA $(\termssf,\continuation,\delta)$ the \emph{syntactic automaton} of $\SF$-expressions.
\end{definition}

In \autoref{fundamental} we give a proof of correctness of partial derivatives: for $e\in\termssf$ the semantics of $e$ is equivalent to the language accepted by $e$ as a state in the syntactic automaton. An analogous property holds for (partial) derivatives in Kleene algebras~\cite{brz,antimirov}, which makes derivatives a powerful tool for reasoning about language models and deciding equivalences of terms~\cite{bonchi-pous-2013}.

In the next two sections, we want to use terms reachable from $e$, that is to say, terms that are a result of repeatedly applying the continuation map on $e$. To this end, we define the following function:
\begin{definition}\label{reach}
For $e,f \in \termssf$ and $a\in\bact$, we inductively define the \emph{reach} function $\reach: \termssf \to \powerset{\termssf}$ as follows:
\begin{align*}
&\reach(e + f) = \reach(e) \cup \reach(f)  && \reach(0) = \emptyset  \\
& \reach(e \cdot f) = \{ e' \cdot f : e' \in \reach(e) \} \cup \reach(f)  && \reach(1) = \{ 1 \}  \\
 &\reach(e^*) = \{ 1 \} \cup \{ e' \cdot e^* : e' \in \reach(e) \}
 && \reach(a) = \{ 1, a \} \\
&\reach(e\times f) = \{e'\times f' : e'\in\reach(e),f'\in\reach(f)\} \cup\reach(e)\cup\reach(f)
&& \reach(H(e))=\{1\}
\end{align*}
\end{definition}
\noindent Using a straightforward inductive argument, one can prove that for all $e \in \termssf$, $\reach(e)$ is finite. Note that $e$ is not always a member of $\reach(e)$.
To see that $\rho(e)$ indeed contains all terms reachable from $e$, we record the following.
\begin{restatable}{lemma}{reachisreach}\label{lemma:reachisreach}
For all $e\in\termssf$ and $A\in\mathcal{P}_n(\bact)$, we have $\delta(e,A)\subseteq\reach(e)$. Also, if $e'\in\reach (e)$, then $\delta(e',A)\subseteq\reach(e)$.
\end{restatable}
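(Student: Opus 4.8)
The plan is to prove both claims by structural induction on $e \in \termssf$, since $\reach$ is defined by recursion on this structure. For the first claim, $\delta(e,A) \subseteq \reach(e)$, and the second, $e' \in \reach(e) \implies \delta(e',A) \subseteq \reach(e)$, I expect it is cleanest to strengthen the induction hypothesis: prove for each $e$ simultaneously that (i) $\delta(e,A) \subseteq \reach(e)$ for all $A \in \mathcal{P}_n(\bact)$, and (ii) $\reach$ is "closed under $\delta$", i.e. $e' \in \reach(e)$ entails $\reach(e') \subseteq \reach(e)$ and $\delta(e',A) \subseteq \reach(e)$. Note (ii) for $e'$ together with (i) for $e'$ actually yields $\delta(e',A)\subseteq\reach(e')\subseteq\reach(e)$, so the real work is showing $\reach(e')\subseteq\reach(e)$ for $e'\in\reach(e)$, plus the base-level inclusion $\delta(e,A)\subseteq\reach(e)$.

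First I would handle the base cases: for $0$, $1$, and $H(f)$ we have $\delta(e,A) = \emptyset$, so (i) is trivial, and $\reach(e)$ is $\emptyset$ or $\{1\}$; since $\reach(1) = \{1\}$ and $\delta(1,A)=\emptyset$, closure holds. For $a \in \cross$, $\delta(a,A) = \{1\}$ if $A = \semsem{a}$ (reading $a = \{a\}$ via the singleton case) and $\emptyset$ otherwise, which lies in $\reach(a) = \{1,a\}$; and the only nontrivial element of $\reach(a)$ other than $a$ itself is $1$, already handled. For the inductive step, each case of $\delta$ mirrors the corresponding case of $\reach$: e.g. for $e \cdot f$, an element of $\delta(e \cdot f, A)$ is either $e'' \cdot f$ with $e'' \in \delta(e,A) \subseteq \reach(e)$ (hence in $\reach(e \cdot f)$ by definition), or comes from $\delta(f,A)$ via the $\Delta$ clause when $\continuation(e)=1$, hence in $\reach(f) \subseteq \reach(e\cdot f)$. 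For $e^*$, elements of $\delta(e^*,A)$ have the form $e'' \cdot e^*$ with $e'' \in \delta(e,A) \subseteq \reach(e)$, landing in $\reach(e^*)$. The closure part for these cases requires showing, for instance, that if $e'' \cdot e^* \in \reach(e^*)$ then $\reach(e'' \cdot e^*) = \{g \cdot e^* : g \in \reach(e'')\} \cup \reach(e^*) \subseteq \reach(e^*)$, which reduces to $\reach(e'') \subseteq \reach(e)$ — exactly the inductive closure hypothesis applied to $e'' \in \reach(e)$. The sum case is immediate from distributivity of both $\delta$ and $\reach$ over $\cup$.

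The main obstacle is the synchronous product case $e \times f$. Here $\delta(e \times f, A)$ is a union of four pieces: $\Delta(e,f,A)$, $\Delta(f,e,A)$ (each contributing terms of $\delta(e,A)$ or $\delta(f,A)$ when the partner terminates), and the genuinely new piece $\{e'' \times f'' : e'' \in \delta(e,B_1), f'' \in \delta(f,B_2), B_1 \cup B_2 = A\}$. By the inductive hypothesis $e'' \in \reach(e)$ and $f'' \in \reach(f)$, so $e'' \times f'' \in \reach(e \times f)$ by the first clause of $\reach(e \times f)$; the $\Delta$ pieces land in $\reach(e) \cup \reach(f) \subseteq \reach(e\times f)$. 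The harder part is closure: a typical element of $\reach(e \times f)$ is $e'' \times f''$ with $e'' \in \reach(e)$, $f'' \in \reach(f)$, and I must show $\reach(e'' \times f'') \subseteq \reach(e \times f)$ and $\delta(e'' \times f'', A) \subseteq \reach(e \times f)$. Unfolding, $\reach(e'' \times f'') = \{g \times h : g \in \reach(e''), h \in \reach(f'')\} \cup \reach(e'') \cup \reach(f'')$; by the closure hypotheses for $e$ and $f$ we get $\reach(e'') \subseteq \reach(e)$ and $\reach(f'') \subseteq \reach(f)$, so every $g \times h$ has $g \in \reach(e)$, $h \in \reach(f)$, hence $g \times h \in \reach(e \times f)$, and the other two summands sit inside $\reach(e) \cup \reach(f) \subseteq \reach(e \times f)$. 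For $\delta(e'' \times f'', A)$ one combines this with part (i) applied at $e''$ and $f''$. One must also not forget the elements of $\reach(e \times f)$ that come from the $\reach(e) \cup \reach(f)$ summand directly — but those are dispatched by the closure hypotheses for $e$ and $f$ verbatim. Throughout, the fact that $\reach(e)$ is finite (noted in the text) is not needed for the inclusions themselves, only later; so the argument is a clean, if somewhat tedious, structural induction whose only delicate bookkeeping is matching up the nested $\reach$/$\delta$ clauses in the product case.
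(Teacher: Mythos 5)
Your proposal is correct and follows essentially the same route as the paper: prove $\delta(e,A)\subseteq\reach(e)$ by structural induction, then establish the auxiliary closure property that $e'\in\reach(e)$ implies $\reach(e')\subseteq\reach(e)$ (again by induction on $e$, with the synchronous-product case handled exactly as you describe), and combine the two to get $\delta(e',A)\subseteq\reach(e')\subseteq\reach(e)$. Packaging this as one simultaneous induction rather than two successive ones is a purely cosmetic difference.
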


\subsection{Normal form}\label{section:normalform}
In this section we develop a \emph{normal form} for expressions in $\cross$, which we will use in the completeness proof for \SF. As $\semsem{-}$ is a surjective function it has at least one right inverse. Let us pick one and denote it by $\piproduct{-}$. We thus have $\piproduct{-}:\mathcal{P}_n(\bact)\rightarrow\cross$ such that $\semsem{-}\circ\piproduct{-}$ is the identity on $\mathcal{P}_n(\Sigma)$. 

The normal form for expressions in $\cross$ is defined as follows:
\begin{definition}[Normal form]\label{normalform}
For all $e\in\cross$ the \emph{normal form} of $e$, denoted as $\overline{e}$, is defined as $\piproduct{\semsem{e}}$. Let $\nfcross=\{\overline{e} : e\in\cross \}$.
\end{definition}

Intuitively, an expression in normal form is standardised with respect to idempotence, associativity and commutativity. For instance, for a term $(a\times a)\times (c\times b)$ with $a,b,c\in\bact$, the chosen normal form, dictated by the chosen right inverse, could be $(a\times b)\times c$, and all terms provably equivalent to $(a\times a)\times (c\times b)$ will have this same normal form. Using \autoref{semilatticestuff}, we can formalise this in the following two results:
\begin{lemma}\label{factsnormalform}
For all $e\in\cross$, we have that $e$ is provably equivalent to its normal form: $e\equivsem\overline{e}$. Moreover, if two expressions $e,f\in\cross$ are provably equivalent, they have the same normal form: if $e\equivsem f$, then $\overline{e}=\overline{f}$.
\end{lemma}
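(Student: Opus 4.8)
The plan is to reduce both claims directly to the soundness-and-completeness result for the semilattice semantics, \autoref{semilatticestuff}, together with the defining property of the fixed right inverse $\piproduct{-}$ chosen just before \autoref{normalform}.

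For the first claim, I would begin from the computation $\semsem{\overline{e}} = \semsem{\piproduct{\semsem{e}}} = \semsem{e}$, where the first equality is simply the definition of $\overline{e}$ in \autoref{normalform} and the second is precisely the statement that $\semsem{-}\circ\piproduct{-}$ is the identity on $\mathcal{P}_n(\bact)$. Thus $e$ and $\overline{e}$ denote the same element of $\mathcal{P}_n(\bact)$, so the completeness direction of \autoref{semilatticestuff} immediately gives $e \equivsem \overline{e}$.

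For the second claim, assume $e \equivsem f$. The soundness direction of \autoref{semilatticestuff} yields $\semsem{e} = \semsem{f}$, and applying the function $\piproduct{-}$ to both sides gives $\overline{e} = \piproduct{\semsem{e}} = \piproduct{\semsem{f}} = \overline{f}$, as required.

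There is no genuine obstacle here: once \autoref{semilatticestuff} is available, the argument is a short bookkeeping exercise, and in particular it avoids any structural induction on $\cross$. The only point worth stating carefully is that $\piproduct{-}$ is a bona fide function — a single right inverse chosen once and for all — so that "apply $\piproduct{-}$ to both sides" in the second claim is legitimate; this is exactly what makes the normal form $\overline{e}$ well defined and makes $\nfcross$ a canonical system of representatives for $\equivsem$.
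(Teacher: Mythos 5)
Your proposal is correct and follows exactly the same route as the paper: the first claim is obtained from $\semsem{\overline{e}}=\semsem{\piproduct{\semsem{e}}}=\semsem{e}$ plus the completeness direction of \autoref{semilatticestuff}, and the second from soundness followed by applying the function $\piproduct{-}$ to both sides. The remark that $\piproduct{-}$ being a single fixed function is what legitimises the last step is a nice explicit touch, but the argument itself is the one in the paper.
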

\begin{proof}
As $\piproduct{-}$ is a right inverse of $\semsem{-}$, we can derive the following:
\[
\semsem{\overline{e}}=\semsem{\piproduct{\semsem{e}}}=\semsem{e}
\]
From completeness we get $e\equivsem\overline{e}$. For the second part of the statement we obtain via soundness that $\semsem{e}=\semsem{f}$ and subsequently that $\overline{e}=\overline{f}$.\qed
\end{proof}

Normalising normalised terms does not change anything.
\begin{restatable}{lemma}{nfisnf}\label{lemma:nfisnf}
For all $e\in\nfcross$ we have that $\overline{e}=e$.
\end{restatable}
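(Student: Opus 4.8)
The plan is to unfold the definition of the normal form and use the defining property of the chosen right inverse $\piproduct{-}$, namely that $\semsem{-}\circ\piproduct{-}$ is the identity on $\mathcal{P}_n(\bact)$. Since $e \in \nfcross$, by \autoref{normalform} there is some $f \in \cross$ with $e = \overline{f} = \piproduct{\semsem{f}}$. We must show $\overline{e} = e$, i.e.\ $\piproduct{\semsem{e}} = e$.

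First I would compute $\semsem{e}$. We have $\semsem{e} = \semsem{\piproduct{\semsem{f}}} = \semsem{f}$, where the last equality is exactly the right-inverse property of $\piproduct{-}$ applied to $\semsem{f} \in \mathcal{P}_n(\bact)$. Then $\overline{e} = \piproduct{\semsem{e}} = \piproduct{\semsem{f}} = \overline{f} = e$, which is the claim. Alternatively, one could phrase this via \autoref{factsnormalform}: from $e \equivsem f$ (which holds since $e = \overline{f}$) one gets $\overline{e} = \overline{f} = e$, but the direct unfolding above is cleaner and avoids invoking soundness/completeness of $\SL$.

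There is no real obstacle here: the statement is essentially idempotence of the normalisation map, which follows immediately from $\piproduct{-}$ being a right inverse of $\semsem{-}$. The only point requiring a moment's care is making explicit that $\semsem{f}$ lies in the domain $\mathcal{P}_n(\bact)$ on which $\semsem{-}\circ\piproduct{-}$ is guaranteed to be the identity, which is immediate from the typing of $\semsem{-}$.
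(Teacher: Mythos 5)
Your proof is correct, but it takes a more direct route than the paper. The paper's proof stays at the level of provable equivalence: it writes $e = \overline{e_0}$ for some $e_0 \in \cross$, uses the first part of \autoref{factsnormalform} to get $e \equivsem e_0$, and then the second part of \autoref{factsnormalform} to conclude $\overline{e} = \overline{e_0} = e$. Since \autoref{factsnormalform} is itself proved via soundness and completeness of \SL (\autoref{semilatticestuff}), the paper's argument implicitly routes through that characterisation. Your main argument instead unfolds the definitions and uses only the right-inverse property $\semsem{-}\circ\piproduct{-} = \mathrm{id}$ on $\mathcal{P}_n(\bact)$ to compute $\semsem{e} = \semsem{\piproduct{\semsem{f}}} = \semsem{f}$ and hence $\overline{e} = \piproduct{\semsem{e}} = \piproduct{\semsem{f}} = e$. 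This is more elementary and self-contained; the alternative you mention in passing is essentially the paper's proof verbatim. The only thing the paper's route buys is uniformity with the surrounding development, which phrases everything through \autoref{factsnormalform}; logically your direct computation is the cleaner of the two.
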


We extend ${(-)}^{\Pi}$ from synchronous strings of length one to words and synchronous languages in the obvious way. For a synchronous string $aw$ with $a\in\mathcal{P}_n(\bact)$ and $w\in {(\mathcal{P}_n(\bact))}^*$, and synchronous language $L\in \synchronouslanguages$ we define:
\begin{mathpar}
\varepsilon^{\Pi}=\varepsilon
\and
{(aw)}^{\Pi}=a^{\Pi}\cdot (w^{\Pi})
\and
L^{\Pi}=\{w^{\Pi} : w\in L\}
\end{mathpar}
We abuse notation and assume the type of ${(-)}^{\Pi}$ is clear from the argument.

Since ${(-)}^\Pi$ is a homomorphism of languages, we have the following.

\begin{restatable}{lemma}{pionlanguages}\label{lemma:pi-on-languages}
For synchronous languages $L$ and $K$, all of the following hold:
\begin{inparaenum}[(i)]
    \item\label{lemma:plus}
    ${(L\cup K)}^{\Pi}=L^{\Pi}\cup K^{\Pi}$,
    \item\label{lemma:sequential}
    ${(L\cdot K)}^{\Pi}=L^{\Pi}\cdot K^{\Pi}$, and
    \item\label{lemma:star}
    ${(L^*)}^{\Pi}= {(L^{\Pi})}^*$.
\end{inparaenum}
\end{restatable}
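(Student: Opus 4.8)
The plan is to reduce all three identities to a single fact about ${(-)}^{\Pi}$ acting on \emph{words}: it is a monoid homomorphism, i.e.\ $\varepsilon^{\Pi} = \varepsilon$ and $(uv)^{\Pi} = u^{\Pi} \cdot v^{\Pi}$ for all $u, v \in {(\mathcal{P}_n(\bact))}^*$. The first equation is immediate from the definition. For the second I would use induction on the length of $u$: the case $u = \varepsilon$ is trivial, and if $u = aw$ with $a \in \mathcal{P}_n(\bact)$ then $(awv)^{\Pi} = a^{\Pi} \cdot (wv)^{\Pi} = a^{\Pi} \cdot (w^{\Pi} \cdot v^{\Pi}) = (a^{\Pi} \cdot w^{\Pi}) \cdot v^{\Pi} = (aw)^{\Pi} \cdot v^{\Pi}$, using the induction hypothesis and associativity of concatenation.

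From here, each item is a short computation with the pointwise lifting of ${(-)}^{\Pi}$ to languages. Item (i) holds because the image of a union under a pointwise-lifted map is the union of the images, so nothing synchronous-specific is involved. For (ii), unfold $(L \cdot K)^{\Pi} = \{ (uv)^{\Pi} : u \in L,\ v \in K \}$ and apply multiplicativity to rewrite the right-hand side as $\{ u^{\Pi} \cdot v^{\Pi} : u \in L,\ v \in K \} = L^{\Pi} \cdot K^{\Pi}$. For (iii), I would first prove $(L^n)^{\Pi} = (L^{\Pi})^n$ by induction on $n$ — the base case is ${(\{\varepsilon\})}^{\Pi} = \{\varepsilon\}$ and the step uses (ii) — and then, since the lifted map commutes with arbitrary unions (as in (i)), conclude $(L^*)^{\Pi} = \big(\bigcup_{n} L^n\big)^{\Pi} = \bigcup_{n} (L^n)^{\Pi} = \bigcup_{n} (L^{\Pi})^n = (L^{\Pi})^*$.

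I do not expect a genuine obstacle here: the whole argument is the monoid-homomorphism property on words together with the elementary fact that images along pointwise-lifted maps commute with unions. The only subtlety worth flagging is the two levels at which ${(-)}^{\Pi}$ operates — on synchronous strings and on synchronous languages — and that the surrounding remark about ${(-)}^{\Pi}$ being ``a homomorphism of languages'' is exactly this combination; in particular the lemma needs nothing about the semilattice structure or about $\semsem{-}$ beyond the recursive clauses defining ${(-)}^{\Pi}$.
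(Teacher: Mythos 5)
Your proposal is correct and follows essentially the same route as the paper: the paper likewise isolates the word-level identity $(x\cdot y)^{\Pi}=x^{\Pi}\cdot y^{\Pi}$ as a separate lemma proved by induction on length, and then derives (i)--(iii) by elementwise reasoning about the pointwise lifting. The only cosmetic difference is in (iii), where you go via $(L^n)^{\Pi}=(L^{\Pi})^n$ and commutation with unions while the paper directly decomposes $v\in L^*$ as $u_1\cdots u_n$; both rest on the same key lemma.
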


\section{A Fundamental Theorem for \SF}\label{fundamental}
In this section we shall algebraically capture an expression in terms of its partial derivatives. This characterisation of an \SF-term will be useful later on in proving completeness but also provides us with a straightforward method to prove correctness of the partial derivatives. Following~\cite{rutten:2003,thesisalexandra}, we call this characterisation a \emph{fundamental theorem} for \SF.\@ Before we state and prove the fundamental theorem, we prove an intermediary lemma:

\begin{lemma}\label{lemma:sommetjes}
For all $e,f\in\termssf$, we have
\[
    \sum_{e'\in\delta(e,A)} (A^{\Pi}\cdot e')\times  \sum_{e'\in\delta(f,A)} (A^{\Pi}\cdot e')\equivsf
    \sum_{\substack{e'\in\delta(e,A) \\ e'' \in \delta(f, B)}} {(A\cup B)}^{\Pi}\cdot (e'\times e'')
\]
\end{lemma}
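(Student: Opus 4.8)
The plan is to prove this by a direct equational calculation, with no induction on $e$ or $f$: since $\delta(e,A)$ and $\delta(f,B)$ are finite sets and every manipulation below is uniform in the summands, it suffices to distribute the synchronous product over the two sums, collapse each resulting summand with the synchrony axiom, and then recognise the head of each summand as ${(A\cup B)}^{\Pi}$. (I read the second factor on the left-hand side as ranging over $f'\in\delta(f,B)$ with head $B^{\Pi}$, so that the equation is well-formed and matches the index $B$ appearing on the right; the case $A=B$ is the special instance of this.) First I would push the product inside using distributivity of $\times$ over $+$ — which holds on both sides, the left-hand variant following from the right-hand one together with commutativity of $\times$ — to obtain
\[
    \Bigl(\sum_{e'\in\delta(e,A)} A^{\Pi}\cdot e'\Bigr)\times\Bigl(\sum_{f'\in\delta(f,B)} B^{\Pi}\cdot f'\Bigr)
    \equivsf
    \sum_{e'\in\delta(e,A)}\ \sum_{f'\in\delta(f,B)} (A^{\Pi}\cdot e')\times(B^{\Pi}\cdot f').
\]
Here one must treat the empty-sum corner cases: if $\delta(e,A)=\emptyset$ or $\delta(f,B)=\emptyset$, then the left-hand side has a factor $0$ and is $\equivsf 0$ by $e\times 0=0$ and commutativity, while the double sum on the right is also empty, hence $0$; so the identity holds trivially in that case.

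Next I would rewrite each summand. Since $A^{\Pi},B^{\Pi}\in\cross$ and $\cross$ plays the role of the semilattice in the axiomatisation of $\equivsf$, the synchrony axiom $(\alpha\cdot e)\times(\beta\cdot f)=(\alpha\times\beta)\cdot(e\times f)$ applies with $\alpha=A^{\Pi}$ and $\beta=B^{\Pi}$, giving $(A^{\Pi}\cdot e')\times(B^{\Pi}\cdot f')\equivsf(A^{\Pi}\times B^{\Pi})\cdot(e'\times f')$ for every pair $e',f'$. It then remains to identify the head: as ${(-)}^{\Pi}$ is a right inverse of $\semsem{-}$, we have $\semsem{A^{\Pi}}=A$ and $\semsem{B^{\Pi}}=B$, so
\[
    \semsem{A^{\Pi}\times B^{\Pi}}=\semsem{A^{\Pi}}\cup\semsem{B^{\Pi}}=A\cup B=\semsem{{(A\cup B)}^{\Pi}},
\]
and completeness of the semilattice theory (\autoref{semilatticestuff}) yields $A^{\Pi}\times B^{\Pi}\equivsem{(A\cup B)}^{\Pi}$; since $\equivsf$ subsumes idempotence, associativity and commutativity of $\times$ on $\cross$, this is in particular an $\equivsf$-equality.

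Finally, applying congruence under the two finite sums replaces every summand by ${(A\cup B)}^{\Pi}\cdot(e'\times f')$, and renaming the index $f'$ to $e''$ matches the right-hand side of the statement verbatim. The whole argument is essentially bookkeeping; the only two points that warrant an explicit word are the empty-sum corner case in the first step (so that distributivity is not applied to a vacuous sum) and the appeal to semilattice completeness that turns $A^{\Pi}\times B^{\Pi}$ into ${(A\cup B)}^{\Pi}$, so I expect the "hard part" to be presentational rather than mathematical.
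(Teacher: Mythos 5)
Your proposal is correct and follows essentially the same route as the paper's proof: distribute $\times$ over the sums, apply the synchrony axiom with $\alpha = A^{\Pi}$, $\beta = B^{\Pi} \in \cross$, and collapse the head $A^{\Pi}\times B^{\Pi}$ to ${(A\cup B)}^{\Pi}$ via the semilattice reasoning (the paper routes this last step through the normal-form operator and \autoref{factsnormalform}, which is the same appeal to \autoref{semilatticestuff} that you make directly). Your explicit treatment of the empty-sum case and of the bound-variable/index reading of the left-hand side are presentational refinements the paper leaves implicit.
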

\begin{proof}
First note the following derivation for $A,B\in\mathcal{P}_n(\bact)$, using \autoref{factsnormalform}, the fact that all axioms of $\equivsem$ are included in $\equivsf$, and that $\piproduct{-}$ is a right inverse of $\semsem{-}$:
\begin{align*}
  \piproducts{A}\times \piproducts{B} &\equivsf \overline{\piproducts{A}\times \piproducts{B}} = \piproduct{\semsem{\piproducts{A}\times \piproducts{B}}} \\
  &=\piproduct{\semsem{\piproducts{A}}\cup\semsem{ \piproducts{B}}}=\piproduct{A\cup B}
\end{align*}
Using distributivity, the synchrony axiom and the equation above, we can derive:
\begin{align*}
  &\sum_{e'\in\delta(e,A)} (A^{\Pi}\cdot e')\times  \sum_{e'\in\delta(f,A)} (A^{\Pi}\cdot e')
  & \equivsf \sum_{\substack{e'\in\delta(e,A) \\ e'' \in \delta(f, B)}} (A^{\Pi}\cdot e')\times (B^{\Pi}\cdot e'')\\
  & \equivsf \sum_{\substack{e'\in\delta(e,A) \\ e'' \in \delta(f, B)}} (A^{\Pi}\times B^{\Pi})\cdot (e'\times e'')
  & \equivsf \sum_{\substack{e'\in\delta(e,A) \\ e'' \in \delta(f, B)}} \piproduct{A\cup B}\cdot (e'\times e'')
\end{align*}
The synchrony axiom can be applied because $\piproducts{A},\piproducts{B}\in\cross$.\qed
\end{proof}

\begin{restatable}[Fundamental Theorem]{theorem}{fundamentaltheorem}\label{theorem:fundamental}
  For all $e\in\termssf$, we have
  \[e\equivsf \continuation(e)+\sum_{e'\in\delta(e,A)} A^{\Pi}\cdot e'.\]
\end{restatable}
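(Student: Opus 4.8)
The plan is to proceed by structural induction on $e \in \termssf$, establishing the identity
\[
    e \equivsf \continuation(e) + \sum_{A \in \mathcal{P}_n(\bact)}\ \sum_{e' \in \delta(e, A)} A^{\Pi} \cdot e'
\]
for each syntactic form. The base cases $0$, $1$, $a \in \cross$, and $H(e)$ are direct: for $0$ and $1$, both $\continuation$ and $\delta$ are trivial; for $a$, we have $\continuation(a) = 0$ and $\delta(a, A) = \{1\}$ exactly when $A = \semsem{a}$, so the sum collapses to $\semsem{a}^{\Pi} \cdot 1$, which is $\equivsf$-equal to $a$ by \autoref{factsnormalform} and $e \cdot 1 = e$; for $H(e)$, note $\delta(H(e), A) = \emptyset$ and $\continuation(H(e)) = \continuation(e) \in \{0,1\}$, and one shows $H(e) \equivf \continuation(e)$ using the $H$-axioms and $H(a) \equivf 0$.

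For the inductive step I would treat $+$, $\cdot$, and $^*$ as in the classical Kleene-algebra fundamental theorem (see \cite{rutten:2003}): for $e + f$ one adds the two induction hypotheses and uses idempotence, distributivity, and $\max(\continuation(e),\continuation(f)) = \continuation(e) + \continuation(f)$ up to $\equivsf$; for $e \cdot f$ one multiplies the hypothesis for $e$ on the right by $f$, distributes $A^{\Pi} \cdot (e' \cdot f)$ correctly over $\delta(e \cdot f, A) = \{e' \cdot f : e' \in \delta(e,A)\}$, and handles the $\continuation(e) = 1$ case by folding in the hypothesis for $f$ (this is where the $\Delta$ bookkeeping in the definition of $\delta$ for sequential composition matters); for $e^*$ one uses $e^* \equivf 1 + e \cdot e^*$ together with the hypothesis for $e$, matching $\delta(e^*, A) = \{e' \cdot e^* : e' \in \delta(e,A)\}$. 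None of these require the synchrony axiom and are essentially routine given \autoref{theorem:sandcka}-style reasoning inside $\equivf$.

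The genuinely new case is $e \times f$, and this is where I expect the main obstacle. Here I would start from the two induction hypotheses $e \equivsf \continuation(e) + \sum_{A,\, e'} A^{\Pi} \cdot e'$ and $f \equivsf \continuation(f) + \sum_{B,\, e''} B^{\Pi} \cdot e''$, multiply them with $\times$, and distribute using the $\times$-distributivity axiom to get a cross term plus two mixed terms $\continuation(e) \times (\cdots)$ and $(\cdots) \times \continuation(f)$. The $\continuation(e) \times X \equivsf \continuation(e) \cdot X$ simplification (using $1 \times X = X$ and $0 \times X = 0$) turns the mixed terms into exactly the $\Delta$-contributions in the definition of $\delta(e \times f, A)$, namely $\Delta(e,f,A) \cup \Delta(f,e,A)$. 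For the cross term $\bigl(\sum_{A, e'} A^{\Pi} \cdot e'\bigr) \times \bigl(\sum_{B, e''} B^{\Pi} \cdot e''\bigr)$, I would invoke \autoref{lemma:sommetjes} (applied for each pair $A, B$, or a straightforward multi-index generalisation of it) to rewrite it as $\sum_{A, B}\sum_{e', e''} (A \cup B)^{\Pi} \cdot (e' \times e'')$, which is precisely $\sum_{C}\sum_{g \in \{e' \times e'' :\, \ldots,\, B_1 \cup B_2 = C\}} C^{\Pi} \cdot g$ — the remaining clause of $\delta(e \times f, C)$. Finally, the constant term: $\continuation(e) \times \continuation(f) \equivsf \min(\continuation(e), \continuation(f)) = \continuation(e \times f)$, again by $1 \times 1 = 1$, $0 \times 1 = 0$, $1 \times 0 = 0$ (the last via commutativity). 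The delicate point is bookkeeping: making sure the index sets on both sides genuinely match — that every way of writing $C = B_1 \cup B_2$ with $e' \in \delta(e, B_1)$, $e'' \in \delta(f, B_2)$ appears exactly as a summand and that idempotence of $+$ lets us ignore multiplicities — and that \autoref{lemma:sommetjes} is applied with the correct free indices $A, B$ ranging over all of $\mathcal{P}_n(\bact)$ rather than a single fixed pair. Once the index-matching is pinned down, each side reduces to the same $\equivsf$-normal form and the case closes.
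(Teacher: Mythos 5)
Your proposal is correct and follows essentially the same route as the paper: induction on term structure, with the classical argument for $+$, $\cdot$ and $^*$, the $H$-homomorphism axioms together with $H(A^{\Pi}) \equivsf 0$ for the $H$ case, and for $e \times f$ the decomposition into a constant term ($\continuation(e)\times\continuation(f) \equivsf \continuation(e\times f)$), two mixed terms that become the $\Delta$-contributions via $1 \times g = g$ and $0 \times g = 0$, and a cross term handled by \autoref{lemma:sommetjes} --- whose index bookkeeping over pairs $A, B$ you rightly flag as the delicate point. The one step you gloss over is the star case: unrolling $e_0^* \equivsf 1 + e_0 \cdot e_0^*$ and substituting the induction hypothesis leaves a stray summand $\continuation(e_0)\cdot e_0^*$ when $\continuation(e_0)=1$, which cannot be discharged by unrolling alone; the paper instead first applies the loop-tightening axiom ${(g+1)}^* = g^*$ to the decomposed form of $e_0$ to discard the constant inside the star before unrolling, and it is this axiom (rather than the least-fixpoint reasoning available in \KA) that makes the ``classical'' step go through in \SF.
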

\begin{proof}
This proof is mostly analogous to the proof of the fundamental theorem for regular expressions, such as the one that can be found in~\cite{thesisalexandra}.

We proceed by induction on $e$. In the base, we have three cases to consider: $e\in\{0,1\}$ or $e=a$ for $a\in\bact$. For $e\in \{0,1\}$, the result follows immediately.
For $e=a$, the only non-empty derivative is $\delta(a,\{a\})$ and the result follows:
\[
    \continuation(a)+\sum_{e'\in\delta(a,A)} A^{\Pi}\cdot e'\equivsf \continuation(a) + \overline{a}\cdot 1 \equivsf \overline{a} \equivsf a
\]
In the inductive step, we treat only the case for synchronous composition; the others can be found in the appendix.
If $e=e_0\times e_1$, derive as follows:
\begin{align*}
& e_0\times e_1 \\
    &\equivsf \big(\continuation(e_0)+\sum_{e'\in\delta(e_0,A)} A^{\Pi}\cdot e'\big) \times \big(\continuation(e_1)+\sum_{e'\in\delta(e_1,A)} A^{\Pi}\cdot e'\big)
        \tag{Ind. hyp.} \\
    &\equivsf \continuation(e_0)\times \continuation(e_1) +\sum_{e'\in\delta(e_0,A)} (A^{\Pi}\cdot e')\times \continuation(e_1)+ \continuation(e_0)\times \sum_{e'\in\delta(e_1,A)} A^{\Pi}\cdot e'\\
    & \quad \quad +\sum_{e'\in\delta(e_0,A)} (A^{\Pi}\cdot e')\times  \sum_{e'\in\delta(e_1,A)} (A^{\Pi}\cdot e')
    \tag{Distributivity}   \\
    & \equivsf \continuation(e_0\times e_1) +\sum_{e'\in\delta(e_0,A)} (A^{\Pi}\cdot e')\times \continuation(e_1)
     + \continuation(e_0)\times \sum_{e'\in\delta(e_1,A)} A^{\Pi}\cdot e' \\
    & \quad \quad +\sum_{\substack{e'\in\delta(e_0,A) \\ e'' \in \delta(e_1, B)}} {(A\cup B)}^{\Pi}\cdot (e'\times e'')
    \tag{Def. $\continuation$, \autoref{lemma:sommetjes}} \\
    & \equivsf \continuation(e_0\times e_1) +\sum_{e'\in\Delta(e_0,e_1,A)\hspace{-1cm}} A^{\Pi}\cdot e' + \sum_{e'\in\Delta(e_1,e_0,A)\hspace{-1cm}} A^{\Pi}\cdot e'
    + \sum_{\substack{ e'\in\{e_0'\times e_1':e_0'\in\delta(e_0,A), \\ e_1'\in\delta(e_1,B),C=A\cup B\}}\hspace{-1cm}} C^{\Pi}\cdot e'\\
    & \equivsf \continuation(e_0\times e_1) +\sum_{e'\in\delta(e_0\times e_1,A)} A^{\Pi}\cdot e' \tag*{(Def. $\delta$) \qed}
\end{align*}
\end{proof}

\subsection*{Correctness of partial derivatives for \SF}
We now relate the partial derivatives for \SF to their semantics.
Let $\ell:\termssf\rightarrow \synchronouslanguages$ be the semantics of the syntactic automaton $(\termssf,\continuation,\delta)$ (\autoref{def:synaut}), uniquely defined by \autoref{uniquesemantics}:

\begin{equation}\label{semanticsofautomaton}
\ell(e) = \{ \varepsilon : \continuation(e) = 1 \} \cup \bigcup_{e' \in \delta(e, A)} \{A\}\cdot \ell(e')
\end{equation}

To prove correctness of derivatives for \SF, we prove that the language semantics of the syntactic automaton and the \SF-expression coincide:
\begin{theorem}[Soundness of derivatives]\label{soundnessderivatives}
For all $e\in\termssf$ we have:
\[ \ell(e)=\semsf{e}\]
\end{theorem}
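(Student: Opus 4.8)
The plan is to prove the two inclusions $\ell(e) \subseteq \semsf{e}$ and $\semsf{e} \subseteq \ell(e)$ separately, with the Fundamental Theorem (\autoref{theorem:fundamental}) as the central tool. The Fundamental Theorem tells us that $e \equivsf \continuation(e) + \sum_{e' \in \delta(e,A)} A^{\Pi} \cdot e'$, which exactly mirrors the recursive equation~\eqref{semanticsofautomaton} defining $\ell$. Applying soundness of \SF (\autoref{soundness:unique}) to the Fundamental Theorem yields
\[
    \semsf{e} = \{\varepsilon : \continuation(e) = 1\} \cup \bigcup_{e' \in \delta(e,A)} \semsf{A^{\Pi}} \cdot \semsf{e'},
\]
and since $\semsf{A^{\Pi}} = \{A\}$ (because $\semsem{-} \circ \piproduct{-}$ is the identity), this becomes the same fixpoint shape as~\eqref{semanticsofautomaton}, but with $\semsf{-}$ in place of $\ell$. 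So morally $\semsf{-}$ ``is'' a solution to the defining equation of $\ell$; the subtlety is that~\eqref{uniquesemantics} characterises $\ell$ as the \emph{unique} such map, and one must be careful that the uniqueness is genuinely applicable.

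For the inclusion $\ell(e) \subseteq \semsf{e}$, I would argue by induction on the length of a word $w \in \ell(e)$. If $w = \varepsilon$, then $\continuation(e) = 1$, and by the Fundamental Theorem plus soundness, $\varepsilon \in \semsf{e}$. If $w = Aw'$ with $w' \in \ell(e')$ for some $e' \in \delta(e,A)$, then $\reach$-finiteness (\autoref{lemma:reachisreach}) and the induction hypothesis give $w' \in \semsf{e'}$; combining with the displayed equation for $\semsf{e}$ above and $\{A\} \cdot \semsf{e'} \subseteq \semsf{e}$ yields $w = Aw' \in \semsf{e}$. The reverse inclusion $\semsf{e} \subseteq \ell(e)$ is the more delicate one, since $\semsf{-}$ is defined by structural recursion on expressions while $\ell$ is defined by the automaton; here I would again induct on word length, using the displayed fixpoint equation for $\semsf{e}$ to peel off the first letter: if $w = Aw' \in \semsf{e}$, then $w' \in \semsf{e'}$ for some $e' \in \delta(e,A)$, and the induction hypothesis gives $w' \in \ell(e')$, hence $w = Aw' \in \{A\} \cdot \ell(e') \subseteq \ell(e)$; the empty-word case is immediate from $\continuation(e) = 1$.

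The main obstacle I anticipate is making the induction on word length rigorous, since neither $\ell$ nor $\semsf{-}$ is defined by recursion on words --- both are fixpoint/recursive definitions, and the derivatives $e'$ ranging over $\delta(e,A)$ need not be structurally smaller than $e$ (indeed $e^*$ has derivatives of the form $e' \cdot e^*$). What saves the argument is that we are not inducting on the expression but on the length of the word being accepted, and at each step the word strictly shortens; the set of relevant expressions stays within $\reach(e) \cup \{e\}$, which is finite by the remark following \autoref{reach}, so there is no well-foundedness issue. An alternative, perhaps cleaner, route would be to observe directly that $w \in \ell(e) \iff w \in \semsf{e}$ by defining, for each expression $e$ and word $w$, the predicate via derivatives and invoking the uniqueness clause of~\eqref{uniquesemantics}: since $\semsf{-}$ satisfies the defining equation of $\ell$ (by the Fundamental Theorem and soundness), and $\ell$ is the \emph{unique} map satisfying it, we get $\ell = \semsf{-}$ on the nose. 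This is the slickest formulation, and I would present it as the primary proof, falling back on the explicit two-inclusion induction only if the uniqueness hypothesis of~\eqref{uniquesemantics} requires the state set to be restricted (e.g.\ to $\reach(e)$) for technical reasons.
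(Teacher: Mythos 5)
Your primary argument --- apply soundness of \SF to the Fundamental Theorem to show that $\semsf{-}$ satisfies the defining equation~\eqref{semanticsofautomaton}, then invoke the uniqueness of $\ell$ --- is exactly the proof the paper gives, and it goes through without needing to restrict the state set. The explicit two-inclusion induction on word length is a correct but unnecessary fallback, so the proposal is essentially the same as the paper's proof.
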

\begin{proof}
  The claim follows almost immediately from the fundamental theorem. From \autoref{lemma:soundness} and \autoref{theorem:fundamental}, we obtain
  \[
  \semsf{e}=\{\varepsilon:\continuation(e)=1\}\cup\bigcup_{e'\in\delta(e,A)} \{A\}\cdot \semsf{e'}
  \]
  Note that $\semsf{A^{\Pi}}=\{\semsem{\piproducts{A}}\}=\{A\}$ by definition of the \SF semantics of a term in $\cross$ and the fact that $\piproduct{-}$ is a right inverse. Because $\ell$ is the only function satisfying \autoref{semanticsofautomaton}, we obtain the desired equality between $\semsf{e}$ and the language $\ell(e)$ accepted by $e$ as a state of the automaton $(\termssf,\continuation,\delta)$.
\qed
\end{proof}

\section{Completeness of \SF}\label{completeness}
In this section we prove completeness of the \SF-axioms with respect to the synchronous language model: we prove that for $e,f\in\termssf$, if $\semsf{e}=\semsf{f}$, then $e\equivsf f$.
We first prove completeness of \SF for a subset of $\SF$-expressions, relying on the completeness result of \F\@ (\autoref{lemma:completenessnormalform}). Then we demonstrate that for every \SF-expression we can find an equivalent \SF-expression in this specific subset (\autoref{thm:normalform}). This subset is formed as follows.

\begin{definition}\label{def:nf}
The set of $\SF$-expressions in \emph{normal form}, $\termsskanf$, is described by the grammar
\[
    \termsskanf\ni e, f ::= 0 \pipe 1 \pipe  a \in \nfcross \pipe e + f \pipe e \cdot f \pipe e^*
\]
where $\nfcross$ is as defined in \autoref{normalform}.
\end{definition}

From this description it is immediate that an \SF-term $e\in\termsskanf$ is formed from terms of $\nfcross$ connected via the regular \F-algebra operators. Hence, $\F$-expressions formed over the alphabet $\nfcross$ are the same set of terms as $\termsskanf$. We shall use this observation to prove completeness for $\termsskanf$ with respect to the language model by leveraging completeness of \F\@.

We use the function $\piproduct{-}$ to give a translation between the \SF semantics of a term in $\termsskanf$ and the \F semantics of that same term:

\begin{restatable}{lemma}{semantictranslation}\label{soundnessf}
For all $e\in\termsskanf$, we have ${(\semsf{e})}^{\Pi}=\semf{e}$.
\end{restatable}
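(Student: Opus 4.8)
The plan is to prove $(\semsf{e})^{\Pi} = \semf{e}$ by structural induction on $e \in \termsskanf$, using the grammar of \autoref{def:nf}. The crucial observation is that, since $e$ is built from letters in $\nfcross$ using only the \F-algebra operators $+$, $\cdot$, $^*$, the \emph{same} syntactic term $e$ is read on the left as an \SF-expression (with semilattice letters interpreted via $\semsf{-}$) and on the right as an \F-expression over the alphabet $\nfcross$ (with letters $a$ interpreted as $\{a\} \subseteq \nfcross^*$). So the two interpretations differ only in how they handle the base case: $\semsf{a} = \{\semsem{a}\}$ is a singleton synchronous language whose single letter is a subset of $\bact$, whereas $\semf{a} = \{a\}$ is a singleton language over $\nfcross$. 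The homomorphism ${(-)}^{\Pi}$ bridges exactly this gap.

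The base cases are $e \in \{0, 1\}$ and $e = a$ for $a \in \nfcross$. For $e = 0$ we have $(\emptyset)^{\Pi} = \emptyset = \semf{0}$; for $e = 1$ we have $(\{\varepsilon\})^{\Pi} = \{\varepsilon\} = \semf{1}$. For $e = a \in \nfcross$: on one side, $\semsf{a} = \{\semsem{a}\}$, a synchronous language containing the single synchronous string of length one whose letter is $\semsem{a} \in \mathcal{P}_n(\bact)$; applying ${(-)}^{\Pi}$ gives $\{(\semsem{a})^{\Pi}\} = \{\overline{a}\}$ by \autoref{normalform}, and since $a \in \nfcross$ we have $\overline{a} = a$ by \autoref{lemma:nfisnf}, so $(\semsf{a})^{\Pi} = \{a\} = \semf{a}$. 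For the inductive step, the three cases $e = e_0 + e_1$, $e = e_0 \cdot e_1$, and $e = e_0^*$ follow by combining the induction hypothesis with the fact that ${(-)}^{\Pi}$ commutes with union, concatenation and Kleene star, which is precisely the content of \autoref{lemma:pi-on-languages}; for instance, $(\semsf{e_0 \cdot e_1})^{\Pi} = (\semsf{e_0} \cdot \semsf{e_1})^{\Pi} = (\semsf{e_0})^{\Pi} \cdot (\semsf{e_1})^{\Pi} = \semf{e_0} \cdot \semf{e_1} = \semf{e_0 \cdot e_1}$, and similarly for $+$ and $^*$.

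The main obstacle, such as it is, is just bookkeeping in the base case $e = a \in \nfcross$: one must carefully track the difference between an element of $\cross$ (a formal $\times$-term), its semilattice denotation in $\mathcal{P}_n(\bact)$, and the application of ${(-)}^{\Pi}$ which sends a subset of $\bact$ back to a (normal-form) element of $\cross$ viewed now as a \emph{letter} of the \F-alphabet $\nfcross$. The identity $\overline{a} = a$ for $a \in \nfcross$ (\autoref{lemma:nfisnf}) is exactly what makes this round trip the identity, and so it is the one nontrivial ingredient. Everything else is a routine propagation of \autoref{lemma:pi-on-languages} through the term structure.
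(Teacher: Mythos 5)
Your proof is correct and matches the paper's own argument essentially step for step: the same structural induction, the same use of \autoref{lemma:nfisnf} to close the base case $a \in \nfcross$ via $\overline{a} = a$, and the same appeal to \autoref{lemma:pi-on-languages} for the $+$, $\cdot$ and $^*$ cases. The only difference is that the paper additionally treats an inductive case $e = H(e_0)$, which does not actually occur in the grammar of $\termsskanf$ given in \autoref{def:nf}, so your omission of it is harmless.
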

\begin{proof}
We proceed by induction on the construction of $e$. In the base, there are three cases to consider. If $e=0$, then $\semsf{e}=\emptyset=\semf{e}$, and we are done. If $e=1$, then ${(\semsf{e})}^{\Pi}={(\{\varepsilon\})}^{\Pi}=\{\varepsilon\}=\semf{1}$, and the claim follows.
If $e=a$ for $a\in\nfcross$, we use \autoref{lemma:nfisnf} to obtain $\overline{a}=a$.
As $a\in\nfcross\subseteq\cross$, we know that ${(\semsf{a})}^{\Pi}={(\{\semsem{a}\})}^{\Pi}=\{\piproduct{\semsem{a}}\}=\{\overline{a}\}=\{a\}=\semf{a}$, and the claim follows.

For the inductive step, first consider $e=H(e_0)$. ${(\semsf{H(e_0)})}^{\Pi}=\{\varepsilon\}$ if $\varepsilon\in\semsf{e_0}$ and $\emptyset$ otherwise. We also have $\semf{H(e_0)}=\{\varepsilon\}$ if $\varepsilon\in\semf{e_0}$ and $\emptyset$ otherwise.
The induction hypothesis states that ${(\semsf{e_0})}^{\Pi}=\semf{e_0}$, from which we obtain that $\varepsilon\in \semsf{e_0}\Leftrightarrow\varepsilon\in\semf{e_0}$. Hence we can conclude that ${(\semsf{H(e_0)})}^{\Pi}=\semf{H(e_0)}$. All other inductive cases follow immediately from \autoref{lemma:pi-on-languages}. The details can be found in the appendix.\qed
\end{proof}

We are now ready to prove completeness of \SF for terms in normal form.
\begin{lemma}\label{lemma:completenessnormalform}
Let $e,f\in\termsskanf$. If $\semsf{e}=\semsf{f}$, then $e\equivsf f$.
\end{lemma}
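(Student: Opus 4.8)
The plan is to reduce completeness of \SF on $\termsskanf$ to the completeness of \F (\autoref{salomaacomplete}). As observed just before the statement, a term $e \in \termsskanf$ is literally an \F-expression over the alphabet $\nfcross$, so it makes sense to apply $\semf{-}$ to it, treating each $a \in \nfcross$ as an opaque letter. The key bridge is \autoref{soundnessf}, which tells us that ${(\semsf{e})}^\Pi = \semf{e}$ for every $e \in \termsskanf$.

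First I would assume $\semsf{e} = \semsf{f}$ for $e, f \in \termsskanf$. Applying ${(-)}^\Pi$ to both sides and invoking \autoref{soundnessf} on each, I get $\semf{e} = {(\semsf{e})}^\Pi = {(\semsf{f})}^\Pi = \semf{f}$. Now I would apply the completeness direction of \autoref{salomaacomplete} (viewing $e$ and $f$ as \F-expressions over $\nfcross$) to conclude $e \equivf f$. The final step is to lift this back to $\equivsf$: since every axiom of \F is also an axiom of \SF, and the congruence rules for the shared operators are the same, any \F-derivation of $e \equivf f$ is also an \SF-derivation, so $e \equivsf f$. One mild point to check here is that the side condition $H(a) \equivf 0$ used in \F (for $a$ a letter) is matched in \SF by $H(\alpha) \equivsf 0$ for $\alpha \in \cross$, which covers all of $\nfcross \subseteq \cross$; so no \F-derivation step is unavailable in \SF.

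The main obstacle — and the reason \autoref{soundnessf} is stated and proved beforehand — is making the reduction to \F rigorous: the naive worry is that $\semf{-}$ over $\nfcross$ might identify fewer or more expressions than $\semsf{-}$ does, because distinct letters of $\nfcross$ are genuinely distinct synchronous strings of length one (that is exactly why we work with normal forms rather than all of $\cross$). \autoref{soundnessf} resolves this by showing ${(-)}^\Pi$ is an injective-enough translation: it is a homomorphism on the \F-operators (\autoref{lemma:pi-on-languages}) and acts as the identity on normalised length-one strings (via \autoref{lemma:nfisnf}), so $\semf{-}$ on $\nfcross$ faithfully mirrors $\semsf{-}$. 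Once that lemma is in hand, the proof is a three-line chain of equalities and implications; the heavy lifting has been front-loaded into \autoref{soundnessf} and the normal-form machinery of \autoref{section:normalform}.
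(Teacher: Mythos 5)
Your proposal is correct and follows essentially the same route as the paper: apply ${(-)}^{\Pi}$ to the premise, use \autoref{soundnessf} to obtain $\semf{e}=\semf{f}$, invoke completeness of \F (\autoref{salomaacomplete}) to get $e\equivf f$, and lift to $\equivsf$ because \SF contains all the axioms of \F. Your extra check that the letter axiom $H(a)\equivf 0$ over the alphabet $\nfcross$ is matched by $H(\alpha)\equivsf 0$ for $\alpha\in\cross$ is a sound refinement of the paper's brief final step, not a deviation from it.
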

\begin{proof}
By the premise, we have that ${(\semsf{e})}^{\Pi}={(\semsf{f})}^{\Pi}$. From \autoref{soundnessf} we
get ${(\semsf{e})}^{\Pi}=\semf{e}$ and ${(\semsf{f})}^{\Pi}=\semf{f}$, which results in  $\semf{e}=\semf{f}$.
From \autoref{salomaacomplete} we know that this entails that $e\equivf f$. As \SF contains all the axioms of \F, we may then conclude that $e\equivsf f$ and the claim follows.\qed
\end{proof}

In order to prove completeness with respect to the language model for all $e\in\termssf$, we prove that for every $e\in\termssf$ there exists a term $\hat{e}\in\termsskanf$ in normal form such that $e\equivsf\hat{e}$. To see this is indeed enough to establish completeness of \SF, imagine we have such a procedure to transform $e$ into $\hat{e}$ in normal form.
We can then conclude that $\semsf{e}=\semsf{f}$ implies $\semsf{\hat{e}}=\semsf{\hat{f}}$, which by \autoref{lemma:completenessnormalform} implies $\hat{e}\equivsf\hat{f}$, and consequently that $e\equivsf f$.

To obtain $\hat{e}$, we will make use of the ``unfolding'' of an \SF-expression $e$ in terms of partial derivatives, given by the fundamental theorem, which will give rise to a linear system. We will then show that this linear system has a unique solution that has the properties we require from $\hat{e}$. Since $e$ is also a solution to this linear system, we can conclude that they are provably equivalent.

Let us start with the following property of linear systems over \SF. A $Q$-vector is a function $x: Q\rightarrow\termssf$ and a $Q$-matrix is a function $M: Q\times Q \rightarrow\termssf$. We call a matrix $M$ \emph{guarded} if $H(M(i,j))=0$ for all $i,j\in Q$. 
We say a vector $p$ and matrix $M$ are in normal form if $p(i)\in\termsskanf$ for all $i\in Q$ and  $M(i,j)\in\termsskanf$ for all $i,j\in Q$. The following lemma is a variation of~\cite[Lemma 2]{salomaa} and the proof is a direct adaptation of the proof found in~\cite[Lemma 3.12]{cka}.

\begin{restatable}{lemma}{leastsolution}\label{lemma:leastsolution}
    Let $(M, p)$ be a $Q$-linear system such that $M$ and $p$ are guarded. We can construct $Q$-vector $x$ that is the unique (up to $\SF$-equivalence) solution to $(M, p)$ in \SF. Moreover, if $M$ and $p$ are in normal form, then so is $x$. 
\end{restatable}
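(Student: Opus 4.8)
The plan is to adapt the classical Gaussian-elimination argument for linear systems over Kleene algebra, but tracking two additional invariants along the way: that the system stays \emph{guarded}, and that — when we start in normal form — every coefficient stays in $\termsskanf$. First I would set up the induction on $|Q|$. In the base case $Q = \emptyset$ (or a singleton), the solution is immediate. For the inductive step, pick some $q \in Q$ and write $Q' = Q \setminus \{q\}$. Looking at the $q$-th equation, the system requires $x(q) \equivsf \sum_{r \in Q} M(q,r) \cdot x(r) + p(q)$, i.e.\ $x(q) \equivsf M(q,q)\cdot x(q) + \big(\sum_{r \in Q'} M(q,r)\cdot x(r) + p(q)\big)$. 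Since $M$ is guarded, $H(M(q,q)) = 0$, so the unique fixpoint axiom applies: the only solution for $x(q)$ (given the rest) is $x(q) \equivsf M(q,q)^* \cdot \big(\sum_{r \in Q'} M(q,r)\cdot x(r) + p(q)\big)$. Substituting this expression for $x(q)$ into the remaining $Q'$ equations yields a $Q'$-linear system $(M', p')$ with $M'(i,j) = M(i,j) + M(i,q)\cdot M(q,q)^*\cdot M(q,j)$ and $p'(i) = p(i) + M(i,q)\cdot M(q,q)^*\cdot p(q)$, to which I apply the induction hypothesis to get a unique solution $x'$ on $Q'$, and then read off $x(q)$ from the displayed formula above.

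The two invariants must be checked to survive this elimination step. For guardedness: using the $\F$-algebra axioms for $H$ (additivity, multiplicativity, and $H(e^*) = H(e)^*$), we compute $H(M'(i,j)) = H(M(i,j)) + H(M(i,q))\cdot H(M(q,q))^* \cdot H(M(q,j)) = 0 + 0 \cdot 0^* \cdot 0 = 0$, since $0^* \equivf 1$ and $0 \cdot 1 \equivf 0$; similarly $H(p'(i)) = 0$. So $(M', p')$ is again guarded and the induction hypothesis genuinely applies. For the normal-form invariant: $\termsskanf$ is exactly the set of $\F$-expressions over the alphabet $\nfcross$ (as observed just before \autoref{soundnessf}), and this set is closed under $+$, $\cdot$, and $^*$; hence if $M$ and $p$ have all entries in $\termsskanf$, so do $M'$ and $p'$, and by induction so does $x'$ on $Q'$. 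Finally, $x(q) = M(q,q)^* \cdot (\sum_{r\in Q'} M(q,r)\cdot x'(r) + p(q))$ is built from $\termsskanf$-expressions using $+$, $\cdot$, $^*$, so $x(q) \in \termsskanf$ as well.

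Uniqueness propagates cleanly: if $y$ is any solution to $(M,p)$, then its restriction $y|_{Q'}$ solves $(M', p')$ — this is the standard back-substitution computation, which uses only the $\F$-algebra axioms and the fixed value forced on $y(q)$ by the unique fixpoint axiom — so by the induction hypothesis $y|_{Q'} \equivsf x'$, and then the forced formula for $y(q)$ gives $y(q) \equivsf x(q)$. Conversely one checks that the $x$ just constructed really is a solution, by substituting back into all $|Q|$ equations.

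\textbf{Main obstacle.} The routine calculations (closure of $\termsskanf$ under the operations, the $H$-bookkeeping) are mechanical; the genuinely delicate point is making the \emph{unique fixpoint axiom} do exactly the work it needs to at each elimination step, which is why guardedness must be maintained as an invariant rather than merely assumed at the top level. I expect the bulk of the care to go into verifying that the eliminated equation forces $x(q)$ uniquely \emph{and} that back-substitution faithfully transports an arbitrary solution of the reduced system to a solution of the original, so that the induction hypothesis can be invoked on a strictly smaller index set. As noted, this mirrors \cite[Lemma 2]{salomaa} and \cite[Lemma 3.12]{cka}, and the only new ingredient over those is the $\termsskanf$-membership claim, which is immediate from the grammar.
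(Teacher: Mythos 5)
Your proposal is correct and follows essentially the same route as the paper's own proof: Gaussian elimination by induction on $|Q|$, with the identical reduced system $(M',p')$, the same formula for the eliminated component via the unique fixpoint axiom, uniqueness by restricting an arbitrary solution to $Q'$ and back-substituting, and preservation of guardedness and of membership in $\termsskanf$ as invariants of the elimination step. The only differences are presentational (you make the $H$-computation for $M'$ explicit, and defer the mechanical check that the constructed $x$ solves all $|Q|$ equations, which the paper carries out in full).
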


We now define the linear system associated to an \SF-expression $e$. This linear system makes use of the partial derivatives for \SF, and essentially represents an NFA that acceps the language described by $e$.

\begin{definition}
Let $e\in\termssf$, and choose $Q_e =\reach(e)\cup \{e\}$, where $\reach$ is the reach function from \autoref{reach}. Define the $Q_e$-vector $x_e$ and the $Q_e$-matrix $M_e$ by
\begin{mathpar}
x_e(e')=\continuation(e')
\and
M_e(e',e'')= \sum_{e''\in\delta(e',A)}A^{\Pi}
\end{mathpar}
\end{definition}

We can now use \autoref{lemma:leastsolution} to obtain the desired normal form $\hat{e}$:

\begin{theorem}\label{thm:normalform}
For all $e\in\termssf$, there exists an $\hat{e}\in\termsskanf$ such that $\hat{e}\equivsf e$.
\end{theorem}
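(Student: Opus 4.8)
The plan is to solve the linear system $(M_e, p_e)$ associated with $e$ using \autoref{lemma:leastsolution}, and to show that the resulting vector entry at $e$ is the desired normal form $\hat{e}$. First I would observe that $M_e$ is guarded: for each $e', e'' \in Q_e$, the matrix entry $M_e(e', e'') = \sum_{e'' \in \delta(e', A)} A^{\Pi}$ is a sum of expressions of the form $A^{\Pi} \in \nfcross$, and since $H(A^{\Pi}) \equivsf 0$ for semilattice elements and $H$ distributes over $+$, we get $H(M_e(e', e'')) \equivsf 0$. Moreover, $M_e$ is in normal form, since each $A^{\Pi}$ lies in $\nfcross$ and $\termsskanf$ is closed under $+$. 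Similarly, the vector $p_e$ defined by $p_e(e') = \continuation(e') \in \{0, 1\}$ is trivially guarded and in normal form. Hence \autoref{lemma:leastsolution} applies and yields a unique (up to $\equivsf$) solution $x$ to $(M_e, p_e)$ with $x(e') \in \termsskanf$ for all $e' \in Q_e$.

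The second step is to exhibit a second solution to $(M_e, p_e)$, namely the ``identity'' vector $z$ given by $z(e') = e'$ for each $e' \in Q_e$. To check that $z$ is a solution, I need to verify that for each $e' \in Q_e$,
\[
    p_e(e') + \sum_{e'' \in Q_e} M_e(e', e'') \cdot z(e'') \equivsf z(e'),
\]
which unpacks to $\continuation(e') + \sum_{e'' \in Q_e} \big(\sum_{e'' \in \delta(e', A)} A^{\Pi}\big) \cdot e'' \equivsf e'$. After reorganising the double sum over $(e'', A)$ pairs, the left-hand side is exactly $\continuation(e') + \sum_{e'' \in \delta(e', A)} A^{\Pi} \cdot e''$, which is provably equivalent to $e'$ by the Fundamental Theorem (\autoref{theorem:fundamental}). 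The one technical point here is that this requires $\delta(e', A) \subseteq Q_e$ for all $e' \in Q_e$, so that the matrix-vector product actually ranges over the right index set; this is precisely what \autoref{lemma:reachisreach} guarantees (together with the fact that $Q_e = \reach(e) \cup \{e\}$, so derivatives of $e$ land in $\reach(e) \subseteq Q_e$, and derivatives of elements of $\reach(e)$ stay in $\reach(e)$).

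Having two solutions $x$ and $z$ to $(M_e, p_e)$, uniqueness from \autoref{lemma:leastsolution} gives $x(e') \equivsf z(e') = e'$ for every $e' \in Q_e$; in particular, setting $\hat{e} := x(e)$, we have $\hat{e} \equivsf e$, and $\hat{e} \in \termsskanf$ since $x$ is in normal form. This completes the argument. The main obstacle, I expect, is bookkeeping: making sure the index sets line up so that the linear system is well-defined over $Q_e$ (this is where \autoref{lemma:reachisreach} does the real work), and carefully matching the shape of the matrix-vector product against the statement of the Fundamental Theorem — in particular handling the reindexing of $\sum_{e''\in Q_e}\sum_{e''\in\delta(e',A)}$ into a single sum over derivatives, which is a notational clash in the definition of $M_e$ but causes no genuine difficulty. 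There is no deep new idea required beyond assembling \autoref{theorem:fundamental}, \autoref{lemma:reachisreach}, and \autoref{lemma:leastsolution}.
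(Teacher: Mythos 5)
Your proposal is correct and follows essentially the same route as the paper: form the linear system $(M_e, x_e)$ from the partial derivatives, invoke \autoref{lemma:leastsolution} for a unique solution in normal form, show the identity vector on $Q_e$ is also a solution via \autoref{theorem:fundamental} and \autoref{lemma:reachisreach}, and conclude by uniqueness. The extra detail you give on why $M_e$ is guarded (using $H(A^{\Pi}) \equivsf 0$ for semilattice elements and distributivity of $H$ over $+$) and on the reindexing of the double sum is exactly the bookkeeping the paper elides.
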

\begin{proof}
It is clear from their definition that $x_e$ and $M_e$ are both in normal form and that $M_e$ is guarded. From \autoref{lemma:leastsolution} we then get that there exists a unique solution $s_e$ to $(M_e, x_e)$, and $s_e$ is a $Q_e$-vector in normal form.
Now consider the $Q_e$-vector $y$ such that $y(q)=q$ for all $q\in Q_e$. Using \autoref{lemma:reachisreach} and \autoref{theorem:fundamental}, we can derive the following:
\begin{align*}
x_e(q) + M_e\cdot y(q)&\equivsf x_e(q) + \sum_{q'\in Q_e}M_e(q,q')\cdot y(q') \\
&\equivsf \continuation(q)+\sum_{q'\in Q_e}\sum_{q'\in\delta(q,A)}A^{\Pi}\cdot q' \\
&\equivsf \continuation(q)+\sum_{q'\in\delta(q,A)}A^{\Pi}\cdot q' \equivsf q = y(q)
\end{align*}
This demonstrates that $y$ is also a solution to $(M_e, x_e)$. As we know from \autoref{lemma:leastsolution} that $s_e$ is unique, we get that $y\equivsf s_e$. This means that $e=y(e)\equivsf s_e(e)$. As $s_e$ is in normal form we get that $s_e(e)\in\termsskanf$. Thus, if we take $s_e(e)=\hat{e}$, then we have obtained the desired result.\qed
\end{proof}

Combining \autoref{thm:normalform} and \autoref{lemma:completenessnormalform} gives the main result of this section:
\begin{theorem}[Soundness and Completeness]\label{maintheorem}
For all $e,f\in\termssf$, we have
\[e\equivsf f \Leftrightarrow \semsf{e}=\semsf{f}
\]
\end{theorem}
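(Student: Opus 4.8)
The plan is to obtain \autoref{maintheorem} by splitting into the two implications, each of which has already been reduced to earlier results. The forward direction, soundness, requires nothing new: it is exactly \autoref{soundness:unique}, which says $e \equivsf f$ implies $\semsf{e} = \semsf{f}$.

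For the backward direction, completeness, I would proceed as follows. Assume $\semsf{e} = \semsf{f}$. By \autoref{thm:normalform}, pick $\hat{e}, \hat{f} \in \termsskanf$ with $e \equivsf \hat{e}$ and $f \equivsf \hat{f}$. Applying soundness (\autoref{soundness:unique}) to these two equivalences gives $\semsf{\hat{e}} = \semsf{e}$ and $\semsf{\hat{f}} = \semsf{f}$, whence $\semsf{\hat{e}} = \semsf{\hat{f}}$. Since $\hat{e}$ and $\hat{f}$ both lie in $\termsskanf$, \autoref{lemma:completenessnormalform} yields $\hat{e} \equivsf \hat{f}$. Chaining $e \equivsf \hat{e} \equivsf \hat{f} \equivsf f$ and using that $\equivsf$ is transitive (it is a congruence) gives $e \equivsf f$, completing the argument.

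So the theorem itself is a two-line corollary; the real weight sits in the two lemmas it invokes. The main obstacle is \autoref{thm:normalform}: producing, for an arbitrary $\SF$-expression $e$, a provably equivalent representative $\hat{e}$ built only from $\nfcross$ and the $\F$-algebra operators. That rests on the fundamental theorem (\autoref{theorem:fundamental}), which unfolds $e$ into a sum over its partial derivatives and thereby presents $e$ as a solution of a guarded linear system, together with \autoref{lemma:leastsolution}, the existence and uniqueness of solutions of guarded linear systems over \SF — this is where Salomaa's unique fixpoint axiom does the essential work that the least fixpoint axioms of \cite{prisacariu} could not. By contrast, \autoref{lemma:completenessnormalform} is comparatively light: the homomorphism $(-)^\Pi$ and \autoref{soundnessf} transport the synchronous-language semantics of a normal-form term to ordinary language semantics over the alphabet $\nfcross$, and then Salomaa's completeness result (\autoref{salomaacomplete}) finishes the job.
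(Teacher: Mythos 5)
Your proposal matches the paper's argument exactly: soundness is \autoref{soundness:unique}, and completeness is obtained by normalising via \autoref{thm:normalform}, transferring the semantic equality to the normal forms by soundness, applying \autoref{lemma:completenessnormalform}, and chaining the equivalences. The paper presents the theorem as precisely this two-line corollary of \autoref{thm:normalform} and \autoref{lemma:completenessnormalform}, so there is nothing to add.
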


As a corollary of \autoref{soundnessderivatives} and \autoref{maintheorem} we know that \SF is decidable by deciding language equivalence in the syntactic automaton.

\section{Related Work}\label{section:related-work}
Synchonous cooperation among processes has been extensively studied
in the context of process calculi such as ASP~\cite{BK1984} and
SCCS~\cite{milner}. SKA bears a strong resemblance to SCCS, with
the most notable differences being the equivalence axiomatised
(bisimulation vs.\ language equivalence), and the use of Kleene star
(unbounded finite recursion) instead of fixpoint (possibly infinite
recursion). Contrary to ASP, but similar to SCCS, SKA cannot express incompatibility of action synchronisation.

In the context of Kleene algebra based frameworks for concurrent reasoning, a synchronous product is just one possible interpretation of concurrency. An interleaving-based approach with a concurrent operator (a parallel operator denoted with $\parallel$) is explored in Concurrent Kleene Algebra~\cite{cka,struthenzo,hoare-2009,hoare-2016,cka}.

We have proved that $\equivsf$ is sound and complete with respect to the synchronous language model by making use of the completeness of \F~\cite{salomaa}. The strategy of transforming an expression $e$ to an equivalent expression $\hat{e}$ with a particular property is often used in literature~\cite{cka,kat,struthenzo,kao}. In particular, we adopted the use of linear systems as a representation of automata, which was first done by Conway~\cite{completenessmetconway} and Backhouse~\cite{backhouse-1975}.
The machinery that we used to solve linear systems in \F is based on Salomaa~\cite{salomaa} and can also be found in~\cite{cka} and~\cite{kozen-2001}. The idea of the syntactic automaton originally comes from Brzozowski, who did this for regular expressions~\cite{brz}. He worked with derivatives which turn a Kleene algebra expression into a deterministic automaton. We worked with partial derivatives instead, resulting in a non-deterministic finite automaton for each \SF-expression. Partial derivatives were first proposed by Antimirov~\cite{antimirov}.

Other related work is that of Hayes et al.~\cite{hayes1}. They explore an algebra of synchronous atomic steps that interprets the synchrony model SKA is based on (Milner's SCCS calculus). However, their algebra is not based on a Kleene algebra --- they use concurrent refinement algebra~\cite{cra} instead.  Later, Hayes et al.\ presented an abstract algebra for reasoning about concurrent programs with an abstract synchronisation operator~\cite{hayes2}, of which their earlier algebra of atomic steps is an instance. A key difference seems to be that Hayes et al.\ use different units for synchronous and sequential composition. It would be interesting to compare expressive powers of the two algebras more extensively.

A decision procedure for equivalence between SKA terms is given by Broda et al.~\cite{broda}. They defined partial derivatives for SKA that we also used in the proof of completeness, and used those to construct an NFA that accepts the semantics of a given SKA expression. Deciding language equivalence of two automata then leads to a decision procedure for semantic equivalence of SKA expressions.

\section{Conclusions and Further Work}\label{section:future-work}
We have presented a complete axiomatisation with respect to the model of synchronous regular languages. We have first proved incompleteness of SKA via a countermodel, exploiting the fact that SKA did not have any axioms relating the synchronous product to the Kleene star. We then provided a set of axioms based on the \F-axioms from Salomaa~\cite{salomaa} and the axioms governing the synchronous product familiar from SKA\@. This was shown to be a sound and complete axiomatisation with respect to the synchronous language model.

In the original SKA paper there is a presentation of \emph{synchronous Kleene algebra with tests} including a wrongful claim of completeness. An obvious next step would be to see if we can prove completeness of \SF with tests. We conjecture \SF with tests is indeed complete and that this is straightforward to prove via a reduction to \SF in a style similar to the completeness proof of KAT~\cite{kat}. Another generalisation is to add a unit to the semilattice, making it a bounded semilattice. This will probably lead to a type of delay operation~\cite{milner}.

Our original motivation to study SKA was to use it as an
axiomatisation of Reo, a modular language of connectors
combining synchronous data flow with an asynchronous one~\cite{BSAR2006}.
The semantics of Reo is based on an automata model very
similar to that of SKAT, in which transitions are labelled
by sets of ports (representing a synchronous data flow) and
constraints (the tests of SKAT). Interestingly, automata are
combined using an operation analogous to the synchronous
product of SKAT expressions. We aim to study the application of SKA or SKAT to Reo in future work.

\paragraph*{Acknowledgements}
The first author is grateful for discussions with Hans-Dieter Hiep and Benjamin Lion.

\bibliographystyle{plainurl}
\bibliography{ref}

\begin{thebibliography}{10}

\bibitem{antimirov}
Valentin~M. Antimirov.
\newblock Partial derivatives of regular expressions and finite automaton
  constructions.
\newblock {\em Theor. Comput. Sci.}, 155(2):291--319, 1996.
\newblock \href {http://dx.doi.org/10.1016/0304-3975(95)00182-4}
  {\path{doi:10.1016/0304-3975(95)00182-4}}.

\bibitem{backhouse-1975}
Roland Backhouse.
\newblock {\em Closure algorithms and the star-height problem of regular
  languages}.
\newblock PhD thesis, University of London, 1975.

\bibitem{BSAR2006}
Christel Baier, Marjan Sirjani, Farhad Arbab, and Jan J. M.~M. Rutten.
\newblock Modeling component connectors in reo by constraint automata.
\newblock {\em Sci. Comput. Program.}, 61(2):75--113, 2006.
\newblock \href {http://dx.doi.org/10.1016/j.scico.2005.10.008}
  {\path{doi:10.1016/j.scico.2005.10.008}}.

\bibitem{BK1984}
Jan~A. Bergstra and Jan~Willem Klop.
\newblock Process algebra for synchronous communication.
\newblock {\em Information and Control}, 60(1-3):109--137, 1984.
\newblock \href {http://dx.doi.org/10.1016/S0019-9958(84)80025-X}
  {\path{doi:10.1016/S0019-9958(84)80025-X}}.

\bibitem{boffa-1990}
Maurice Boffa.
\newblock Une remarque sur les syst{\`{e}}mes complets d'identit{\'{e}}s
  rationnelles.
\newblock {\em {ITA}}, 24:419--428, 1990.

\bibitem{bonchi-pous-2013}
Filippo Bonchi and Damien Pous.
\newblock Checking {NFA} equivalence with bisimulations up to congruence.
\newblock In {\em Proc. Principles of Programming Languages ({POPL})}, pages
  457--468, 2013.
\newblock \href {http://dx.doi.org/10.1145/2429069.2429124}
  {\path{doi:10.1145/2429069.2429124}}.

\bibitem{broda}
Sabine Broda, S{\'{\i}}lvia Cavadas, Miguel Ferreira, and Nelma Moreira.
\newblock Deciding synchronous {K}leene algebra with derivatives.
\newblock In {\em Proc. Implementation and Application of Automata ({CIAA})},
  pages 49--62, 2015.
\newblock \href {http://dx.doi.org/10.1007/978-3-319-22360-5_5}
  {\path{doi:10.1007/978-3-319-22360-5_5}}.

\bibitem{brz}
Janusz~A. Brzozowski.
\newblock Derivatives of regular expressions.
\newblock {\em J. {ACM}}, 11(4):481--494, 1964.
\newblock \href {http://dx.doi.org/10.1145/321239.321249}
  {\path{doi:10.1145/321239.321249}}.

\bibitem{completenessmetconway}
John~Horton Conway.
\newblock {\em Regular Algebra and Finite Machines}.
\newblock Chapman and Hall, Ltd., London, 1971.

\bibitem{cra}
Ian~J. Hayes.
\newblock Generalised rely-guarantee concurrency: an algebraic foundation.
\newblock {\em Formal Asp. Comput.}, 28(6):1057--1078, 2016.
\newblock \href {http://dx.doi.org/10.1007/s00165-016-0384-0}
  {\path{doi:10.1007/s00165-016-0384-0}}.

\bibitem{hayes}
Ian~J. Hayes, Robert~J. Colvin, Larissa~A. Meinicke, Kirsten Winter, and
  Andrius Velykis.
\newblock An algebra of synchronous atomic steps.
\newblock In {\em Proc. Formal Methods (FM)}, pages 352--369, 2016.
\newblock \href {http://dx.doi.org/10.1007/978-3-319-48989-6_22}
  {\path{doi:10.1007/978-3-319-48989-6_22}}.

\bibitem{hoare-2009}
Tony Hoare, Bernhard M{\"{o}}ller, Georg Struth, and Ian Wehrman.
\newblock Concurrent {K}leene algebra.
\newblock In {\em Proc. Concurrency Theory (CONCUR)}, pages 399--414, 2009.
\newblock \href {http://dx.doi.org/10.1007/978-3-642-04081-8_27}
  {\path{doi:10.1007/978-3-642-04081-8_27}}.

\bibitem{hoare-2016}
Tony Hoare, Stephan van Staden, Bernhard M{\"{o}}ller, Georg Struth, and
  Huibiao Zhu.
\newblock Developments in concurrent {K}leene algebra.
\newblock {\em J. Log. Algebr. Meth. Program.}, 85(4):617--636, 2016.
\newblock \href {http://dx.doi.org/10.1016/j.jlamp.2015.09.012}
  {\path{doi:10.1016/j.jlamp.2015.09.012}}.

\bibitem{kao}
Tobias Kapp{\'{e}}, Paul Brunet, Jurriaan Rot, Alexandra Silva, Jana Wagemaker,
  and Fabio Zanasi.
\newblock Kleene algebra with observations.
\newblock \href {http://arxiv.org/abs/1811.10401} {\path{arXiv:1811.10401}}.

\bibitem{cka}
Tobias Kapp{\'{e}}, Paul Brunet, Alexandra Silva, and Fabio Zanasi.
\newblock Concurrent {K}leene algebra: Free model and completeness.
\newblock In {\em Proc. European Symposium on Programming (ESOP)}, pages
  856--882, 2018.
\newblock \href {http://dx.doi.org/10.1007/978-3-319-89884-1_30}
  {\path{doi:10.1007/978-3-319-89884-1_30}}.

\bibitem{kozen-1994}
Dexter Kozen.
\newblock A completeness theorem for {K}leene algebras and the algebra of
  regular events.
\newblock {\em Inf. Comput.}, 110(2):366--390, 1994.
\newblock \href {http://dx.doi.org/10.1006/inco.1994.1037}
  {\path{doi:10.1006/inco.1994.1037}}.

\bibitem{kozen-2001}
Dexter Kozen.
\newblock Myhill-{N}erode relations on automatic systems and the completeness
  of {K}leene algebra.
\newblock In {\em Proc. Symposium on Theoretical Aspects of Computer Science
  ({STACS})}, pages 27--38, 2001.
\newblock \href {http://dx.doi.org/10.1007/3-540-44693-1_3}
  {\path{doi:10.1007/3-540-44693-1_3}}.

\bibitem{kat}
Dexter Kozen and Frederick Smith.
\newblock Kleene algebra with tests: Completeness and decidability.
\newblock In {\em Proc. Computer Science Logic ({CSL})}, pages 244--259, 1996.
\newblock \href {http://dx.doi.org/10.1007/3-540-63172-0_43}
  {\path{doi:10.1007/3-540-63172-0_43}}.

\bibitem{krob-1991}
Daniel Krob.
\newblock Complete systems of {B}-rational identities.
\newblock {\em Theor. Comput. Sci.}, 89(2):207--343, 1991.
\newblock \href {http://dx.doi.org/10.1016/0304-3975(91)90395-I}
  {\path{doi:10.1016/0304-3975(91)90395-I}}.

\bibitem{struthenzo}
Michael~R. Laurence and Georg Struth.
\newblock Completeness theorems for pomset languages and concurrent {K}leene
  algebras.
\newblock \href {http://arxiv.org/abs/1705.05896} {\path{arXiv:1705.05896}}.

\bibitem{milner}
Robin Milner.
\newblock Calculi for synchrony and asynchrony.
\newblock {\em Theor. Comput. Sci.}, 25:267--310, 1983.
\newblock \href {http://dx.doi.org/10.1016/0304-3975(83)90114-7}
  {\path{doi:10.1016/0304-3975(83)90114-7}}.

\bibitem{prisacariu}
Cristian Prisacariu.
\newblock Synchronous {K}leene algebra.
\newblock {\em J. Log. Algebr. Program.}, 79(7):608--635, 2010.
\newblock \href {http://dx.doi.org/10.1016/j.jlap.2010.07.009}
  {\path{doi:10.1016/j.jlap.2010.07.009}}.

\bibitem{rutten:2003}
Jan J. M.~M. Rutten.
\newblock Behavioural differential equations: a coinductive calculus of
  streams, automata, and power series.
\newblock {\em Theor. Comput. Sci.}, 308(1-3):1--53, 2003.
\newblock \href {http://dx.doi.org/10.1016/S0304-3975(02)00895-2}
  {\path{doi:10.1016/S0304-3975(02)00895-2}}.

\bibitem{salomaa}
Arto Salomaa.
\newblock Two complete axiom systems for the algebra of regular events.
\newblock {\em J. {ACM}}, 13(1):158--169, 1966.
\newblock \href {http://dx.doi.org/10.1145/321312.321326}
  {\path{doi:10.1145/321312.321326}}.

\bibitem{thesisalexandra}
Alexandra Silva.
\newblock {\em Kleene Coalgebra}.
\newblock PhD thesis, Radboud Universiteit Nijmegen, 2010.

\end{thebibliography}

\newpage
\appendix
\section{Appendix}
\synlanska*
\begin{proof}
The carrier $\synchronouslanguages$ is obviously closed under the operations of synchronous Kleene algebra. We need only argue that each of the SKA axioms is valid on synchronous languages.

The proof for the Kleene algebra axioms follows from the observation that synchronous languages over the alphabet $\bact$ are simply languages over the alphabet $\mathcal{P}_n(\bact)$. Thus we know that the Kleene algebra axioms are satisfied, as languages over alphabet $\mathcal{P}_n(\bact)$ with $1=\{\varepsilon\}$ and $0=\emptyset$ form a Kleene algebra.

For the semilattice axioms, note that $S$ is isomorphic to $\mathcal{P}_n(\Sigma)$ (by sending a singleton set in $S$ to its sole element), and that the latter forms a semilattice when equipped with $\cup$. Since the isomorphism between $S$ and $\mathcal{P}_n(\Sigma)$ respects these operators, it follows that $(S, \times)$ is also a semilattice.

The first SKA axiom that we check is commutativity. We prove that $\times$ on synchronous strings is commutative by induction on the paired length of the strings. Consider synchronous strings $u$ and $v$. For the base, where $u$ and $v$ equal $\varepsilon$, the result is immediate. In the induction step, we take $u=xu'$ with $x\in\mathcal{P}_n(\bact)$. If $v=\varepsilon$ we are done immediately. Now for the case $v=yv'$ with $y\in\mathcal{P}_n(\bact)$.
We have $u\times v= (xu')\times (yv')=(x\cup y)\cdot (u'\times v')$. From the induction hypothesis we know that $u'\times v' = v'\times u'$. Combining this with commutativity of union we have $u\times v = (x\cup y)\cdot (v'\times u') = v\times u$.
Take synchronous languages $K$ and $L$. Now consider $w\in K\times L$. This means that $w=u\times v$ for $u\in K$ and $v\in L$. From commutativity of synchronous strings we know that $w=u\times v= v\times u$. And thus we have $w\in L\times K$. The other inclusion is analogous.

It is obvious that the axioms $K\times \emptyset=\emptyset$ and $K\times \{\varepsilon\}=K$ are satisfied.

For associativity we again first argue that $\times$ on synchronous strings is associative. Take synchronous strings $u,v$ and $w$. We will show by induction on the paired length of $u,v$ and $w$ that $u\times (v\times w)= (u\times v)\times w$. If $u,v,w=\varepsilon$ the result is immediate. Now consider $u=xu'$ for $x\in\mathcal{P}_n(\bact)$. If $v$ or $w$ equals $\varepsilon$ the result is again immediate. Hence we consider the case where $v=yv'$ and $w=zw'$ for $y,z\in\mathcal{P}_n(\bact)$.
From the induction hypothesis we know that $u'\times (v'\times w')= (u'\times v')\times w'$.
We can therefore derive
\begin{align*}
u\times (v\times w)&= (xu') \times (yv'\times zw') = (xu') \times ((y\cup z)\cdot (v'\times w')) \\
&= (x\cup (y\cup z))\cdot (u'\times (v'\times w')) = (x\cup (y\cup z))\cdot ((u'\times v')\times w')
\end{align*}
From associativity of union, we then know that $(x\cup (y\cup z))\cdot ((u'\times v')\times w')=(u\times v)\times w$.
Now consider $t\in K\times(L\times J)$ for $K,L$ and $J$ synchronous languages. Thus $t=u\times (v\times w)$ for $u\in K$, $v\in L$ and $w\in J$. From associativity of synchronous strings we know that $t=u\times (v\times w)= (u\times v)\times w$, and thus we have $t\in (K\times L)\times J$. The other inclusion is analogous.

For distributivity consider $w\in K\times (L+J)$ for $K,L,J$ synchronous languages. This means that $w=u\times v$ for $u\in K$ and $v\in L+J$. Thus we know $v\in L$ or $v\in J$. We immediately get that $u\times v\in K\times L$ or $u\times v\in K\times J$ and therefore that $w\in K\times L + K\times J$. The other direction is analogous.

For the synchrony axiom we take synchronous languages $K,L$ and $A,B\in S$. Suppose $A=\{x\}$ and $B=\{y\}$ for $x,y\in\mathcal{P}_n(\bact)$. Take $w\in (A\cdot K)\times (B\cdot L)$. This means that $w=u\times v$ for $u\in A\cdot K$ and $v\in B\cdot L$.
Thus we know that $u=xu'$ with $u'\in K$ and $v=yv'$ with $v'\in L$. From this we conclude $w=u\times v=(xu')\times (yv')=(x\cup y)\cdot (u'\times v')$. As $u'\in K$ and $v'\in L$ and $x\cup y = x\times y$ with $x\in A$ and $y\in B$, we have that $w\in (A\times B)\cdot (K\times L)$.
For the other direction, consider $w\in (A\times B)\cdot (K\times L)$. This entails $w=t\cdot v$ for $t\in A\times B$ and $v\in K\times L$. As $A\times B=\{x\cup y\}$ we have $t=x\cup y$. And $v=u\times s$ for $u\in K$ and $s\in L$.
Thus $t\cdot v=(x\cup y)\cdot (u\times s)=(xu)\times (ys)$ for $u\in K$, $s\in L$, $x\in A$ and $y\in B$. Hence $w\in (A\cdot K)\times (B\cdot L)$.\qed
\end{proof}

\soundness*
\begin{proof}
This is proved by induction on the construction of $\equivska$. In the base case we need to check all the axioms generating $\equivska$, which we have already done for \autoref{lemma:synlan_ska}. For the inductive step, we need to check the closure rules for congruence preserve soundness. This is all immediate from the definition of the semantics of SKA and the induction hypothesis. For instance, if $e=e_0+e_1$, $f=f_0+f_1$, $e_0\equivska f_0$ and $e_1\equivska f_1$, then $\semska{e}=\semska{e_0}+\semska{e_1}=\semska{f_0}+\semska{f_1}=\semska{f}$, where use that $\semska{e_0}=\semska{f_0}$ and $\semska{e_1}=\semska{f_1}$ as a consequence of the induction hypothesis.
\end{proof}

\astar*
\begin{proof}
For the first inclusion, take $w\in\semska{a^* \times a^*} = \semska{a^*} \times \semska{a^*}$. Thus we have $w=u\times v$ for $u,v\in\semska{a^*}$. Hence $u=x_1\cdots x_n$ for $x_i\in\semska{a}$ and $v=y_1\cdots y_m$ for $y_i\in\semska{a}$.
As $\semska{a}=\{\semsem{a}\}$ with $\semsem{a}\in\mathcal{P}_n(\bact)$, we know that $x_i=\semsem{a}$ and $y_i=\semsem{a}$.
Assume that $n\leq m$ without loss of generality. We then know that $v=u\cdot \semsem{a}^{m-n}$, where synchronous string $e^n$ indicates $n$ copies of string $e$ concatenated. Unrolling the definition of $\times$ on words, we find $u\times v=u\times (u\cdot \semsem{a}^k)=(u\times u)\cdot \semsem{a}^k=u\cdot\semsem{a}^k=v$, and hence $w = u \times v = v \in \semska{a^*}$.
For the other inclusion, take $w\in\semska{a^*}$. As $\varepsilon\in\semska{a^*}$ and $w\times \varepsilon=w$, we immediately have $w\in\semska{a^*}\times\semska{a^*}$.\qed
\end{proof}

\begin{lemma}\label{infiniteness}
  For $K,L\in\countermodel$, $K$ a non-empty finite language and $L$ an infinite language, $K\times L$ is an infinite language.
\end{lemma}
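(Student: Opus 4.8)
The plan is to reduce everything to the trivial structure of synchronous strings over the one-letter alphabet $\{\modela\}$. Since $\mathcal{P}_n(\{\modela\}) = \{\{\modela\}\}$ has exactly one element, every synchronous string over $\{\modela\}$ has the form $\{\modela\}^n$ for a unique $n \in \mathbb{N}$; in particular two such strings are equal iff they have the same length $|{-}|$. I would first record the \emph{auxiliary fact} that the synchronous product of such strings just keeps the longer operand: for synchronous strings $u, v$ over $\{\modela\}$ with $|v| \ge |u|$ we have $u \times v = v$. This is a short induction on $|u|$: the base $u = \varepsilon$ is the axiom $\varepsilon \times v = v$, and in the step $u = \{\modela\}\cdot u'$, $v = \{\modela\}\cdot v'$ (so $|v'| \ge |u'|$) we get $u \times v = (\{\modela\}\cup\{\modela\})\cdot(u'\times v') = \{\modela\}\cdot v' = v$ by the induction hypothesis. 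Equivalently, $u \times v$ always has length $\max(|u|,|v|)$.

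Then the lemma follows immediately. Since $K$ is non-empty, fix some $u \in K$ and put $N = |u|$. By the auxiliary fact, for every $v \in L$ with $|v| \ge N$ we have $u \times v = v$, so
\[
K \times L \;\supseteq\; \{\, u \times v : v \in L \,\} \;\supseteq\; \{\, v \in L : |v| \ge N \,\}.
\]
The set on the right is obtained from $L$ by removing at most the finitely many strings of length $< N$ (at most $N$ of them, one per length), so it is infinite because $L$ is. Hence $K \times L$ is infinite, as required.

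There is no real obstacle here; the only point needing any care is the auxiliary induction, and it can even be sidestepped by tracking lengths directly: from $\{\modela\}^n \times \{\modela\}^m = \{\modela\}^{\max(n,m)}$ one sees that the set of lengths occurring in $K \times L$ contains $\{\,\max(N,m) : \{\modela\}^m \in L\,\}$, which is infinite; and over a one-letter alphabet distinct lengths give distinct synchronous strings, so $K \times L$ is infinite. This lemma is exactly what is needed to justify the ``$K, L$ infinite'' clause in the definition of $\times$ on the countermodel (\autoref{definitionoperators}) within the full proof of \autoref{lemma:countermodelisska}.
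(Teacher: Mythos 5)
Your proof is correct and rests on the same key observation as the paper's, namely that the length of $u \times v$ is $\max(|u|,|v|)$. The only difference is presentational: the paper argues by contradiction (if $K\times L$ were finite, the lengths of words in $L$ would be bounded, forcing $L$ to be finite over a finite alphabet), whereas you exhibit an infinite subset of $K\times L$ directly by exploiting the unary alphabet $\{\modela\}$ to get $u\times v=v$ when $|v|\ge|u|$ --- both are fine, with the paper's version having the marginal advantage of not depending on the alphabet being a singleton.
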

\begin{proof}
Suppose that $K\times L$ is a finite language. Hence we have an upper bound on the length of words in $K\times L$. Since the length of the synchronous product of two words is obviously the maximum of the length of the operands, this means we also have an upper bound on the length of words in $L$, and as we have finite words over a finite alphabet in $L$ this means that $L$ is finite. Hence we get a contradiction, thus $K\times L$ is infinite.
\end{proof}

\countermodelis*
\begin{proof}
In the main text we treated one of the least fixpoint axioms and the synchrony axiom, and here we will treat all the remaining cases. For the sake of brevity, for each axiom we omit the cases where we can appeal to the proof for (synchronous) regular languages.

The proof that $(S, \times)$ is a semilattice is the same as in \autoref{lemma:synlan_ska}.
Next, we take a look at the Kleene algebra axioms.
If $K\in \countermodel$, then $K+\emptyset= \emptyset$ holds by definition of union of sets. If $K=\dagger$, we get $\dagger+\emptyset=\dagger$, and the axiom also holds.

For $K\in \countermodel\cup\{\dagger\}$, the axiom $K+K=K$ also easily holds by definition of the plus operator. Same for $K\cdot \{\varepsilon\}= K = K \cdot \{\varepsilon\}$ and $K\cdot \emptyset = \emptyset = \emptyset \cdot K$ by definition of the operator for sequential composition.

It is easy to see the axioms $1+e\cdot e^*\equivska e^*$ and $1+e^*\cdot e\equivska e^*$ hold for $K\in \countermodel$. In case $K=\dagger$, for $1+e\cdot e^*\equivska e^*$ we have
\[
1+\dagger\cdot\dagger^*= 1+\dagger\cdot\dagger=1+\dagger=\dagger=\dagger^*
\]
and a similar derivation for $1+e^*\cdot e\equivska e^*$.

For the commutativity of $+$ we take $K,L \in \countermodel\cup\{\dagger\}$.
If $K=\dagger$ or $L=\dagger$, we have $K+L=\dagger=L+K$.

For associativity of the plus operator we take $K,L,J\in \countermodel\cup\{\dagger\}$. If any of $K$, $L$ or $J$ is $\dagger$, it is easy to see the axiom holds.

For associativity of the sequential composition operator, consider $K,L,J\in \countermodel\cup\{\dagger\}$. We first can observe that if one of $K$, $L$ or $J$ is empty, then the equality holds trivially. Otherwise, if one of $K$, $L$ and $J$ is $\dagger$, then $(K\cdot L)\cdot J = \dagger = K\cdot (L\cdot J)$.

Next, we verify distributivity of concatenation over $+$. We will show a detailed proof for left-distributivity only; right-distributivity can be proved similarly. Let $K,L,J\in \countermodel\cup\{\dagger\}$. If one of $K$, $L$, or $J$ is empty, then the claim holds immediately (the derivation is slightly different for $K$ versus $L$ or $J$). Otherwise, if one of $K$, $L$ or $J$ is $\dagger$, then $K\cdot (L + J) = \dagger = K\cdot L + K\cdot J$.

For the remaining least fixpoint axiom, let $K,L,J\in \countermodel\cup\{\dagger\}$. Assume that $K+L\cdot J\leq L$. We need to prove that $K\cdot J^*\leq L$. If $L=\dagger$, then the claim holds immediately. If $L\in\countermodel$ and $J = \dagger$, then $L$ must be empty, hence $K$ is empty, and the claim holds. If $L,J\in\countermodel$, then also $K\in\countermodel$ and the proof goes through as it does for KA\@.

We now get to the axioms for the $\times$-operator. The commutativity axiom is obvious from the commutative definition of $\times$ (as we already know that $\times$ is commutative on synchronous strings). The axiom $K\times \emptyset=\emptyset$ is also satisfied by definition. The same holds for the axiom $K\times \{\varepsilon\}= K$ as $\{\varepsilon\}$ is finite.

For associativity of the synchronous product, consider $K,L,J\in \countermodel\cup\{\dagger\}$. If one of $K$, $L$ or $J$ is empty, then both sides of the equation evaluate to $\emptyset$. Otherwise, if one of $K$, $J$, or $L$ is $\dagger$, then both sides of the equation evaluate to~$\dagger$. If $K$, $J$ and $L$ are all languages, and at most one of them is finite, then either $K \times L = \dagger$, in which case the left-hand side evaluates to $\dagger$, or $K \times L$ is infinite (by \autoref{infiniteness}) and $J = \dagger$, in which case the right-hand side evaluates to $\dagger$ again. The right-hand side can be shown to evaluate to $\dagger$ by a similar argument.
In the remaining cases (at least two out of $K$, $J$ and $L$ are finite languages and none of them is $\dagger$ or $\emptyset$), the proof of associativity for the language model applies.

For distributivity of synchronous product over $+$, let $K,L,J\in \countermodel\cup\{\dagger\}$. If one of $K$, $L$ or $J$ is $\emptyset$, then the proof is straightforward. Otherwise, if one of $K$, $L$ or $J$ is $\dagger$, then both sides evaluate to $\dagger$. If $K$ and $L + J$ are infinite, then the outcome is again $\dagger$ on both sides (note that $L + J$ being infinite implies that either $L$ or $J$ is infinite).
In the remaining cases, $K$, $L$ and $J$ are languages and either $K$ or $L + J$ (hence $L$ and $J$) is finite. In either case the proof for synchronous regular languages goes through.
\qed
\end{proof}

\soundnessska*
\begin{proof}
We need to verify each of the axioms of \SF. The proof for the  axioms of \F is immediate via the observation that synchronous languages over the alphabet $\bact$ are simply languages over the alphabet $\mathcal{P}_n(\bact)$. Thus we know that the \F-axioms are satisfied, as languages over alphabet $\mathcal{P}_n(\bact)$ with $1=\{\varepsilon\}$ and $0=\emptyset$ form an \F-algebra. The additional axioms are the same as the ones that were added to KA for SKA, and we know they are sound from~\autoref{lemma:soundness}. 
\qed
\end{proof}

\reachisreach*
\begin{proof}
We prove the first statement by induction on the structure of $e$.
In the base, if we have $e \in \{0, 1\}$, the claim holds vacuously. If we have $a\in\bact$, then $\reach(a)=\{1,a\}$ and $\delta(a,A)=\{1 : A=\{a\}\}$, so the claim follows.
For the inductive step, there are five cases to consider.
\begin{itemize}
    \item If $e=H(e_0)$, then immediately $\delta(H(e_0),A)=\emptyset$ so the claim holds vacuously.
    \item
    If $e = e_0 + e_1$, then by induction we have $\delta(e_0, A) \subseteq \reach(e_0)$ and $\delta(e_1, A) \subseteq \reach(e_1)$.
    Hence, we find that $\delta(e, A) = \delta(e_0, A) \cup \delta(e_1, A) \subseteq \reach(e_0) \cup \reach(e_1) = \reach(e)$.
    \item
    If $e = e_0 \cdot e_1$, then by induction we have $\delta(e_0, A) \subseteq \reach(e_0)$ and $\delta(e_1, A) \subseteq \reach(e_1)$.
    Hence, we can calculate that
    \begin{align*}
    \delta(e, A)
        &= \{ e_0' \cdot e_1 : e_0' \in \delta(e_0, A) \} \cup \Delta(e_1,e_0,A) \\
        &\subseteq \{ e_0' \cdot e_1 : e_0' \in \reach(e_0) \} \cup \reach(e_1)
         = \reach(e)
    \end{align*}

    \item If $e = e_0 \times e_1$, then by induction we have $\delta(e_0, A) \subseteq \reach(e_0)$ and $\delta(e_1, A) \subseteq \reach(e_1)$ for all $A\in\mathcal{P}_n(\bact)$.
    Hence, we can calculate that
    \begin{align*}
    \delta(e, A)
        &= \{ e_0' \times e_1' : e_0' \in \delta(e_0, B_1), e_1'\in\delta(e_1,B_2),B_1\cup B_2=A \} \\
        & \quad \cup \Delta(e_0, e_1, A) \cup \Delta(e_1, e_0, A) \\
        &\subseteq \{ e_0' \times e_1' : e_0' \in \reach(e_0), e_1'\in\reach(e_1) \} \cup \reach(e_0)\cup \reach(e_1) = \reach(e)
    \end{align*}

    \item
    If $e = e_0^*$, then by induction we have $\delta(e_0, A) \subseteq \reach(e_0)$.
    Hence, we find that
    \[
        \delta(e, A) = \{ e_0' \cdot e_0^* : e_0' \in \delta(e_0, A) \} \subseteq \{ e_0' \cdot e_0^* : e_0' \in \reach(e_0) \}\subseteq\reach(e)
    \]
\end{itemize}

For the second statement, we prove that if $e' \in \reach(e)$, then $\reach(e') \subseteq \reach(e)$. The result of the first part tells us that $\delta(e',A)\subseteq\reach(e')$, which together with $\reach(e') \subseteq \reach(e)$ proves the claim.
We proceed by induction on $e$.
In the base, there are two cases to consider.
First, if $e =0$, then the claim holds vacuously. If $e=1$, then the only $e'\in\reach(e)$ is $e'=1$, so the claim holds. If $e=a$ for $a\in\bact$, we have $\reach(e)=\{1,a\}$. It trivially holds that $\reach(e')\subseteq \reach(e)$ for $e'\in\reach(e)$.

For the inductive step, there are four cases to consider.
\begin{itemize}
    \item
    If $e = H(e_0)$, then $\reach(e) = \{ 1 \}$, and the proof is as in the case where $e = 1$.

    \item
    If $e = e_0 + e_1$, assume w.l.o.g.\ that $e' \in \reach(e_0)$.
    By induction, we derive that
    \[
        \reach(e') \subseteq \reach(e_0) \subseteq \reach(e)
    \]

    \item
    If $e = e_0 \cdot e_1$ then there are two cases to consider.
    \begin{itemize}
        \item
        If $e' = e_0' \cdot e_1$ where $e_0' \in \reach(e_0)$, then we calculate
        \begin{align*}
            \reach(e')
                &= \{ e_0'' \cdot e_1 : e_0'' \in \reach(e_0') \} \cup \reach(e_1)\\
                &\subseteq \{ e_0'' \cdot e_1 : e_0'' \in \reach(e_0) \} \cup \reach(e_1)
                = \reach(e)
        \end{align*}
        \item
        If $e' \in \reach(e_1)$, then by induction we have $\reach(e') \subseteq \reach(e_1) \subseteq \reach(e)$.
    \end{itemize}

    \item
    If $e = e_0 \times e_1$ then there are three cases to consider.
    \begin{itemize}
        \item
        The first case is $e' = e_0' \times e_1'$ where $e_0' \in \reach(e_0)$ and $e_1'\in\reach(e_1)$, we get $\reach(e_0')\subseteq\reach(e_0)$ and $\reach(e_1')\subseteq\reach(e_1)$ by induction.
We calculate
        \begin{align*}
            \reach(e')
                &= \{ e_0'' \times e_1'' : e_0'' \in \reach(e_0'), e_1'' \in \reach(e_1') \} \cup\reach(e_0')\cup\reach(e_1') \\
                & \subseteq \{ e_0'' \cdot e_1'' : e_0'' \in \reach(e_0), e_1'' \in \reach(e_1) \} \cup\reach(e_0)\cup\reach(e_1) \\
                & = \reach(e)
        \end{align*}
        \item
        For $e' \in \reach(e_0)$, then by induction we have $\reach(e') \subseteq \reach(e_0) \subseteq \reach(e)$.
        \item
        For $e' \in \reach(e_1)$, the argument is similar to the previous case.
    \end{itemize}

    \item
    If $e = e_0^*$, then either $e' = 1$ or $e' = e_0' \cdot e_0^*$ for some $e_0' \in \reach(e_0)$.
    In the former case, $\reach(e') = \{1\}\subseteq \reach(e)$.
    In the latter case, we find by induction that
    \begin{align*}
    \reach(e')
        &= \{ e_0'' \cdot e_0^* : e_0'' \in \reach(e_0') \} \cup \reach(e_0^*) \\
        &\subseteq \{ e_0'' \cdot e_0^* : e_0'' \in \reach(e_0) \} \cup \reach(e_0^*) \subseteq  \reach(e_0^*)
        \tag*{\qed}
    \end{align*}
\end{itemize}
\end{proof}

\nfisnf*
\begin{proof}
  As $e\in\nfcross$ we have that $e=\overline{e_0}$ for some $e_0\in\cross$. From \autoref{factsnormalform} we know that $\overline{e_0}\equivsf e_0$. So we get $e\equivsf e_0$. Again from \autoref{factsnormalform}
  we then know that $\overline{e}=\overline{e_0}=e$.\qed
\end{proof}

\begin{lemma}%
  \label{lemma:pi-on-strings}
  For $x,y\in{(\mathcal{P}_n(\bact))}^*$, we have ${(x\cdot y)}^{\Pi}=x^{\Pi}\cdot y^{\Pi}$.
\end{lemma}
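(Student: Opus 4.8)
The plan is to prove this by a straightforward induction on the length of the word $x$, using only the recursive definition of ${(-)}^{\Pi}$ on synchronous strings together with associativity of concatenation. Intuitively, ${(-)}^{\Pi}$ extends the single-letter map $a \mapsto a^{\Pi}$ to words letterwise, and any such homomorphic extension automatically respects concatenation; the lemma just records this fact in the present notation, and it is exactly the ingredient behind the clause ${(L\cdot K)}^{\Pi}=L^{\Pi}\cdot K^{\Pi}$ of \autoref{lemma:pi-on-languages}.

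Concretely, in the base case $x=\varepsilon$ we have $x\cdot y=y$, so ${(x\cdot y)}^{\Pi}=y^{\Pi}$, while $x^{\Pi}\cdot y^{\Pi}=\varepsilon^{\Pi}\cdot y^{\Pi}=\varepsilon\cdot y^{\Pi}=y^{\Pi}$, using $\varepsilon^{\Pi}=\varepsilon$ and that $\varepsilon$ is the unit of concatenation. For the inductive step, write $x=a\cdot x'$ with $a\in\mathcal{P}_n(\bact)$ and $x'\in{(\mathcal{P}_n(\bact))}^*$. Then $x\cdot y=a\cdot(x'\cdot y)$, and unfolding the definition of ${(-)}^{\Pi}$ (applied to a string starting with $a$) twice, together with the induction hypothesis for $x'$, gives ${(x\cdot y)}^{\Pi}=a^{\Pi}\cdot{(x'\cdot y)}^{\Pi}=a^{\Pi}\cdot(x'^{\Pi}\cdot y^{\Pi})$. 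One application of associativity of concatenation rewrites this as $(a^{\Pi}\cdot x'^{\Pi})\cdot y^{\Pi}={(a\cdot x')}^{\Pi}\cdot y^{\Pi}=x^{\Pi}\cdot y^{\Pi}$, which closes the induction.

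There is no genuine obstacle here; the only point worth keeping straight is the overloading of $\cdot$ — concatenation of synchronous strings on the left-hand side versus concatenation of the ${(-)}^{\Pi}$-images (words over $\cross$, with each $a^{\Pi}\in\cross$ regarded as a one-letter word) on the right-hand side — but both operations are associative with unit $\varepsilon$, so the bookkeeping is routine.
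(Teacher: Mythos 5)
Your proof is correct and takes essentially the same route as the paper's: a straightforward induction unfolding the recursive definition of ${(-)}^{\Pi}$, with the only (immaterial) difference being that you induct on the length of $x$ whereas the paper inducts on the length of $xy$ and then case-splits on whether $x$ is empty.
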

\begin{proof}
We proceed by induction on the lenth of $xy$. In the base, we have $xy=\varepsilon$. Thus $x=\varepsilon$ and $y=\varepsilon$. We have $\varepsilon^{\Pi}=\varepsilon$ so the result follows immediately.
In the inductive step we consider $xy=aw$ for $a\in\mathcal{P}_n(\bact)$. We have to consider two cases. In the first case we have $x=ax'$.
The induction hypothesis gives us that ${(x'\cdot y)}^{\Pi}=x'^{\Pi}\cdot y^{\Pi}$.
We then have ${(x\cdot y)}^{\Pi}={(ax'\cdot y)}^{\Pi}=a^{\Pi}\cdot{(x'\cdot y)}^{\Pi}=a^{\Pi}\cdot x'^{\Pi}\cdot y^{\Pi}=x^{\Pi}\cdot y^{\Pi}$.
In the second case we have $x=\varepsilon$ and $y=aw$. We then conclude that ${(x\cdot y)}^{\Pi}=y^{\Pi}= x^{\Pi}\cdot y^{\Pi}$.\qed
\end{proof}

\pionlanguages*
\begin{proof}
\begin{enumerate}[(i)]
    \item
    First, suppose $w \in {(L\cup K)}^{\Pi}$. Thus we have $w=v^{\Pi}$ for $v\in L\cup K$. This gives us $v\in L$ or $v\in K$. We assume the former without loss of generality. Thus we know $w=v^{\Pi}\in L^{\Pi}$. Hence we know $w\in L^{\Pi}\cup K^{\Pi}$. The other direction can be proved analogously.
    \item
      First, suppose $w \in {(L\cdot K)}^{\Pi}$. Thus we have $w=v^{\Pi}$ for some $v\in L\cdot K$. This gives us $v=v_1\cdot v_2$ for some $v_1\in L$ and some $v_2\in K$.
      By definition of ${(-)}^{\Pi}$ we know that $v_1^{\Pi}\in L^{\Pi}$ and $v_2^{\Pi}\in K^{\Pi}$. Thus we have $v_1^{\Pi}\cdot v_2^{\Pi}\in L^{\Pi}\cdot K^{\Pi}$.
      From \autoref{lemma:pi-on-strings} we know that $w=v^{\Pi}={(v_1\cdot v_2)}^{\Pi}=v_1^{\Pi}\cdot v_2^{\Pi}$, which gives us the desired result of $w\in L^{\Pi}\cdot K^{\Pi}$. The other direction can be proved analogously.
    \item
    Take $w\in {(L^*)}^{\Pi}$. Thus we have $w=v^{\Pi}$ for some $v\in L^*$. By definition of the star of a synchronous language we know that $v=u_1\cdots u_n$ for $u_i\in L$. As $u_i\in L$, we have $u_i^{\Pi}\in L^{\Pi}$ and $u_1^{\Pi}\cdots u_n ^{\Pi}\in {(L^{\Pi})}^*$.
    By \autoref{lemma:pi-on-strings}, we know that $w=v^{\Pi}={(u_1\cdots u_n)}^{\Pi}=u_1^{\Pi}\cdots u_n^{\Pi}$. Thus we have $w\in {(L^{\Pi})}^*$, which is the desired result. The other direction can be proved analogously.
    \qed
\end{enumerate}
\end{proof}

\fundamentaltheorem*
\begin{proof}
Here we treat the inductive cases not displayed in the main proof, where we treated only the synchronous case.
\begin{itemize}
    \item
    If $e=H(e_0)$, derive:
    \begin{align*}
    H(e_0)
        &\equivsf H(o(e_0)) + \sum_{e'\in\delta(e_0,A)} H(A^\Pi) \cdot H(e') \tag{IH, compatibility of $H$} \\
        &\equivsf H(o(e_0)) \tag{$H(A^\Pi) = 0$} \\ 
        &\equivsf o(H(e_0)) \tag{$o(H(e_0)) \in 2$} \\ 
        &\equivsf o(H(e_0)) + \sum_{e' \in \delta(H(e_0), A)} A^\Pi \cdot e'\tag{Def. $\delta$} \\
    \end{align*}

    \item
    If $e=e_0+e_1$, derive:
    \begin{align*}
    e_0+e_1
        &\equivsf \continuation(e_0)+\sum_{e'\in\delta(e_0,A)} A^{\Pi}\cdot e' + \continuation(e_1)+\sum_{e'\in\delta(e_1,A)} A^{\Pi}\cdot e'
            \tag{IH} \\
        &\equivsf \continuation(e_0+e_1)+\sum_{e'\in\delta(e_0,A)\cup\delta(e_1,A)} A^{\Pi}\cdot e'
            \tag{Def. $o$, merge sums} \\
        &\equivsf \continuation(e_0+e_1)+\sum_{e'\in\delta(e_0+e_1,A)} A^{\Pi}\cdot e'
            \tag{Def. $\delta$ }
    \end{align*}

    \item
    If $e=e_0\cdot e_1$, derive:
    \begin{align*}
    e_0\cdot e_1
        &\equivsf \big(\continuation(e_0)+\sum_{e'\in\delta(e_0,A)} A^{\Pi}\cdot e'\big) \cdot e_1
            \tag{IH} \\
        &\equivsf \continuation(e_0)\cdot e_1 +\sum_{e'\in\delta(e_0,A)} (A^{\Pi}\cdot e'\cdot e_1)
        \tag{Distributivity}   \\
        & \equivsf \continuation(e_0)\cdot \big(\continuation(e_1)+\sum_{e'\in\delta(e_1,A)} A^{\Pi}\cdot e'\big) +\sum_{e'\in\delta(e_0,A)} (A^{\Pi}\cdot e'\cdot e_1)
        \tag{IH} \\
        & \equivsf \continuation(e_0\cdot e_1) + \continuation(e_0)\cdot\sum_{e'\in\delta(e_1,A)} A^{\Pi}\cdot e' +\sum_{e'\in\delta(e_0,A)} (A^{\Pi}\cdot e'\cdot e_1)
        \tag{Def. $\continuation$, distributivity} \\
        & \equivsf \continuation(e_0\cdot e_1) + \sum_{e'\in\Delta(e_1,e_0,A)} A^{\Pi}\cdot e' +\sum_{e'\in\{e_0'\cdot e_1:e_0'\in\delta(e_0,A)\}} A^{\Pi}\cdot e' \\
        &\equivsf \continuation(e_0\cdot e_1)+\sum_{e'\in\delta(e_0\cdot e_1,A)} A^{\Pi}\cdot e'
            \tag{Def. $\delta$ }
    \end{align*}

    \item
    If $e=e_0^*$, we derive:
    \begin{align*}
      e_0^*
      &\equivsf  {\Big(\continuation(e_0)+\sum_{e'\in\delta(e_0,A)}A^{\Pi}\cdot e'\Big)}^* \tag{Induction hypothesis} \\
      &\equivsf {\Big(\hspace{-0.6cm}\sum_{\quad e'\in\delta(e_0,A)}\hspace{-0.6cm}A^{\Pi}\cdot e'\Big)}^* \tag{$\continuation(e_0) \in 2$ and loop tightening}\\ 
      &\equivsf 1+ \Big(\sum_{e'\in\delta(e_0,A)}A^{\Pi}\cdot e'\Big)\cdot {\Big(\sum_{e'\in\delta(e_0,A)}A^{\Pi}\cdot e'\Big)}^* \tag{star axiom of \SF} \\
      &\equivsf 1+ \Big(\sum_{e'\in\delta(e_0,A)}A^{\Pi}\cdot e'\Big)\cdot e_0^* \tag{first two steps} \\
      &\equivsf 1+ \sum_{e'\in\delta(e_0,A)}(A^{\Pi}\cdot e'\cdot e_0^*) \tag{Distributivity}\\
        &\equivsf \continuation(e_0^*)+ \sum_{e'\in\delta(e_0^*,A)}A^{\Pi}\cdot e' \tag*{(Def. $\continuation$, def. $\delta$) \qed}
    \end{align*}
\end{itemize}
\end{proof}

\semantictranslation*
\begin{proof}
In the main text we have treated the base cases. The inductive cases work as follows. There are three cases to consider. If $e=e_0+e_1$, then ${(\semska{e})}^{\Pi}={(\semska{e_0}\cup\semska{e_1})}^{\Pi}={(\semska{e_0})}^{\Pi}\cup {(\semska{e_1})}^{\Pi}$ (\autoref{lemma:pi-on-languages}).
From the induction hypothesis we obtain ${(\semska{e_0})}^{\Pi}=\semka{e_0}$ and ${(\semska{e_1})}^{\Pi}=\semka{e_1}$.
Combining these results we get ${(\semska{e})}^{\Pi}=\semka{e_0}\cup\semka{e_1}=\semka{e_0}+\semka{e_1}=\semka{e_0+e_1}$, so the claim follows.
Secondly, if $e=e_0\cdot e_1$, then ${(\semska{e})}^{\Pi}={(\semska{e_0}\cdot\semska{e_1})}^{\Pi}={(\semska{e_0})}^{\Pi}\cdot{(\semska{e_1})}^{\Pi}$ (\autoref{lemma:pi-on-languages}).
From the induction hypothesis we obtain ${(\semska{e_0})}^{\Pi}=\semka{e_0}$ and ${(\semska{e_1})}^{\Pi}=\semka{e_1}$.
We can then conclude that ${(\semska{e})}^{\Pi}=\semka{e_0}\cdot\semka{e_1}=\semka{e_0\cdot e_1}$.
Lastly, if $e=e_0^*$, we get ${(\semska{e_0^*})}^{\Pi}={({(\semska{e_0})}^*)}^{\Pi}={({(\semska{e_0})}^{\Pi})}^*$ (\autoref{lemma:pi-on-languages}).
From the induction hypothesis we obtain ${(\semska{e_0})}^{\Pi}=\semka{e_0}$.
Thus we have ${(\semska{e})}^{\Pi}=\semka{e_0}^* = \semka{e_0^*}$ and the claim follows.\qed
\end{proof}

\leastsolution*
\begin{proof}
We will construct $x$ by induction on the size of $Q$. In the base, let $Q=\emptyset$. In this case the unique $Q$-vector is a solution.
In the inductive step, take $k\in Q$ and let $Q'=Q\setminus\{k\}$. Then construct the $Q'$-linear system $(M', p')$ as follows:
\begin{align*}
  M'(i,j)&=M(i,k)\cdot {M(k,k)}^*\cdot M(k,j) + M(i,j)\\
  p'(i)&= p(i)+M(i,k)\cdot {M(k,k)}^*\cdot p(k)
\end{align*}
As $Q'$ is a strictly smaller set than $Q$ and $M'$ is guarded, we can apply our induction hypothesis to $(M', p')$. So we know by induction that $(M', p')$ has a unique solution $x'$. Moreover, if $M'$ and $p'$ are in normal form, so is $x'$; note that if $M$ and $p$ are in normal form, then so are $M'$ and $p'$.

We use $x'$ to construct the $Q$-vector $x$:
\[x(i)=\twopartdef{x'(i)}{i\neq k}{{M(k,k)}^*\cdot \big(p(k)+\sum_{j\in Q'}M(k,j)\cdot x'(j)\big)}{i=k}
\]
The first thing to show now is that $x$ is indeed a solution of $(M, p)$. To this end, we need to show that $M\cdot x+p\equivsf x$. We have two cases. For $i\in Q'$ we derive:
\begin{align*}
x(i)&=x'(i)\tag{Def. $x$} \\
& \equivsf p'(i)+\sum_{j\in Q'} M'(i,j)\cdot x'(j) \tag{$x'$ solution of $(M', p')$} \\ 
& \equivsf p(i)+M(i,k)\cdot {M(k,k)}^*\cdot p(k) \\
& \quad +\sum_{j\in Q'} (M(i,k)\cdot {M(k,k)}^*\cdot M(k,j) + M(i,j))\cdot x'(j) \tag{Def. $(M', p')$} \\ 
& \equivsf p(i)+\sum_{j\in Q'}M(i,j)\cdot x'(j) \\
& \quad + M(i,k)\cdot {M(k,k)}^*\cdot \big(p(k)+ \sum_{j\in Q'}M(k,j)\cdot x'(j)\big) \tag{Distributivity} \\
& \equivsf p(i)+\sum_{j\in Q'}M(i,j)\cdot x(j) + M(i,k)\cdot x(k)  \tag{Def. $x$} \\
& \equivsf p(i)+\sum_{j\in Q}M(i,j)\cdot x(j)  \tag{Merge sum}
\end{align*}

For $i=k$, we derive:
\begin{align*}
x(k) &= {M(k,k)}^*\cdot \big(p(k)+\sum_{j\in Q'}M(k,j)\cdot x'(j)\big) \tag{Def. $x$} \\
& \equivsf (1+M(k,k)\cdot {M(k,k)}^*)\cdot \big(p(k)+\sum_{j\in Q'}M(k,j)\cdot x'(j)\big) \tag{star axiom} \\
& \equivsf p(k)+\sum_{j\in Q'}M(k,j)\cdot x'(j) \\
& \quad +M(k,k)\cdot {M(k,k)}^* \cdot \big(p(k)+\sum_{j\in Q'}M(k,j)\cdot x'(j)\big) \tag{Distributivity} \\
& \equivsf p(k)+\sum_{j\in Q'}M(k,j)\cdot x(j) + M(k,k)\cdot x(k) \tag{Def. $x$} \\
& \equivsf p(k)+\sum_{j\in Q}M(k,j)\cdot x(j) \tag{Merge sum}
\end{align*}
We now know that $x$ is a solution to $(M, p)$ because $M\cdot x+p\equivsf x$. Furthermore, if $M$ and $p$ are in normal form, then so is $x'$, and thus $x$ is in normal form by construction.

Next we claim that $x$ is unique. Let $y$ be any solution of $(M, p)$. We choose the $Q'$-vector $y'$ by taking $y'(i)=y(i)$. To see that $y'$ is a solution to $(M', p')$, we first claim that the following holds:
\begin{equation}\label{equation:yk}
y(k)\equivsf {M(k,k)}^* \cdot \Big( p(k) + \sum_{j\in Q'} M(k,j)\cdot y(j)\Big)
\end{equation}

To see that this is true, derive
\begin{align*}
y(k) &\equivsf p(k)+\sum_{j\in Q}M(k,j)\cdot y(j) \tag{$y$ solution of $(M, p)$} \\ 
& \equivsf p(k) + M(k,k)\cdot y(k) + \sum_{j\in Q'}M(k,j)\cdot y(j)\tag{Split sum} \\
& \equivsf {M(k,k)}^* \cdot \Big( p(k) + \sum_{j\in Q'} M(k,j)\cdot y(j)\Big) \tag{Unique fixpoint axiom}
\end{align*}
Note that we can apply the unique fixpoint axiom because we know that $M$ is guarded and thus that $H(M(k,k))=0$.

Now we can derive the following:
\begin{align*}
y'(i)&=y(i)\tag{Def. $y$} \\
& \equivsf p(i)+\sum_{j\in Q} M(i,j)\cdot y(j) \tag{$y$ solution of $(M, p)$} \\ 
& \equivsf p(i)+ M(i,k)\cdot y(k)+\sum_{j\in Q'} M(i,j)\cdot y(j) \tag{Split sum} \\
& \equivsf p(i)+\sum_{j\in Q'} M(i,j)\cdot y(j) \\
& \quad + M(i,k)\cdot {M(k,k)}^* \cdot \Big( p(k) + \sum_{j\in Q'} M(k,j)\cdot y(j)\Big) \tag{\autoref{equation:yk}} \\
& \equivsf p(i)+M(i,k)\cdot {M(k,k)}^*\cdot p(k) \\
& \quad + \sum_{j\in Q'}\big(M(i,k)\cdot {M(k,k)}^*\cdot M(k,j) + M(i,j)\big)\cdot y(j) \tag{Distributivity} \\
& \equivsf p'(i)+\sum_{j\in Q'}M'(i,j)\cdot y(j)  \tag{Def. $(M', p')$} 
\end{align*}
Thus $y'$ is a solution to $(M', p')$. As $x'$ is the unique solution to $(M', p')$, we know that $y'\equivsf x'$.

For $i\neq k$ we know that $x(i)=x'(i)\equivsf y'(i)=y(i)$. For $i=k$ we can derive:
\begin{align*}
y(k)&\equivsf {M(k,k)}^* \cdot \Big( p(k) + \sum_{j\in Q'} M(k,j)\cdot y(j)\Big)\tag{\autoref{equation:yk}} \\
& \equivsf {M(k,k)}^* \cdot \Big( p(k) + \sum_{j\in Q'} M(k,j)\cdot y'(j)\Big) \tag{Def. $y'$} \\
& \equivsf {M(k,k)}^* \cdot \Big( p(k) + \sum_{j\in Q'} M(k,j)\cdot x'(j)\Big) \tag{$x'\equivsf y'$}  \\ 
& \equivsf {M(k,k)}^* \cdot \Big( p(k) + \sum_{j\in Q'} M(k,j)\cdot x(j)\Big) \tag{Def. $x'$} \\
& \equivsf x(k) \tag{Def. $x$}
\end{align*}
Thus, $y\equivsf x$, thereby proving that $x$ is the unique solution to $(M, p)$.
\qed
\end{proof}
\end{document}